\newtheorem{thm}{Theorem}
\def\BibTeX{{\rm B\kern-.05em{\sc i\kern-.025em b}\kern-.08em
    T\kern-.1667em\lower.7ex\hbox{E}\kern-.125emX}}
\begin{document}
\receiveddate{XX Month, XXXX}
\reviseddate{XX Month, XXXX}
\accepteddate{XX Month, XXXX}
\publisheddate{XX Month, XXXX}
\currentdate{XX Month, XXXX}
\doiinfo{OJCOMS.2022.1234567}

\title{AP Connection Method for Maximizing Throughput Considering User Moving and Degree of Interference Based on Potential Game}

\author{Yu Kato\authorrefmark{1} (MEMBER, IEEE), Jiquan Xie\authorrefmark{2} (MEMBER, IEEE), Tutomu Murase\authorrefmark{2} (MEMBER, IEEE), AND Sumiko Miyata\authorrefmark{1} (MEMBER, IEEE)}
\affil{Shibaura Institute of Technology, 3-7-5 Toyosu, Koto-ku, Tokyo, 135-8548, Japan}
\affil{Nagoya University, Furocho, Nagoya Chikusa-ku, Aichi, Tokyo, 464--8601, Japan}
\corresp{CORRESPONDING AUTHOR: Yu Kato (e-mail: ma22035@sic.shibaura-it.ac.jp).}
\authornote{This work was supported by the National Institute of Information and Communications Technology (NICT) and the JSPS KAKENHI Grant Numbers JP19K11947, JP22K12015, JP20H00592, JP21H03424.}
\markboth{Preparation of Papers for IEEE OPEN JOURNALS}{Author \textit{et al.}}

\begin{abstract}
For multi-transmission rate environments, access point (AP) connection methods have been proposed for maximizing system throughput, which is the throughput of an entire system, on the basis of the cooperative behavior of users. These methods derive optimal positions for the cooperative behavior of users, which means that new users move to improve the system throughput when connecting to an AP. However, the conventional method only considers the transmission rate of new users and does not consider existing users, even though it is necessary to consider the transmission rate of all users to improve system throughput. In addition, these method do not take into account the frequency of interference between users. 

In this paper, we propose an AP connection method which maximizes system throughput by considering the interference between users and the initial position of all users. In addition, our proposed method can improve system throughput by about \(6 \, \%\) at most compared to conventional methods.
\end{abstract}

\begin{IEEEkeywords}
AP connection method, Throughput, Interference, Potential game, User moving, Optimal position.
\end{IEEEkeywords}


\maketitle

\section{INTRODUCTION}
\IEEEPARstart{I}{n} recent years, advances in communication technology have increased the number of users of wireless terminal devices such as smartphones, tablets, and personal computers. In addition, public wireless LAN services have been widely deployed in cafes, offices, schools, and other public facilities due to the rapid spread of wireless LANs. In this environment, multiple users share a same access point (AP). 

In general, total throughput of an AP (System throughput) is reduced when one user requests an extremely low transmission rate compared with  other users in a multi transmission rate environment where each user has a different transmission rate \cite{1208921}. Therefore, AP connection methods have been proposed for selecting an appropriate position in a multi-transmission rate environment in order to solve this problem \cite{1208921}\cite{miyata_CCNC}\cite{kato_CCNC}.

In conventional work, an AP connection method has been proposed to maximize system throughput based on users' cooperative behavior \cite{miyata_CCNC}. The user's cooperative behavior means that the user moves to improve the system throughput. Therefore, this conventional work assumes that the user is moving, and this model changes the transmission rate depending on the user's position. However, this conventional work did not consider interference, even though users need to move to positions where interference is considered in order to improve system throughput. Here, interference means packet collisions between users. 

On the other hand, other conventional work has proposed an AP connection method that maximizes system throughput based on users' cooperative behavior and considering the interference between users \cite{kato_CCNC}. However, this work assumes that there is always interference between users. In recent wireless LAN environments that are equipped with CSMA/CA, it is difficult to imagine a situation where there is always interference between users because CSMA/CA has the ability to control packet collisions \cite{csma_ca_collision}. Therefore, for a realistic analysis, it is necessary to consider the degree of interference between users.

In addition, there are other problems with these works \cite{miyata_CCNC}\cite{kato_CCNC}. To improve system throughput, the transmission rate of all users and interference need to be considered. However, the conventional works only consider the transmission rates of a limited user  and do not consider all users. In general, there are users who are fixed in conventional works, although there is a probability that some fixed user may be moving \cite{user_move}. We need to consider the performance of AP connection method assuming some fixed users also move.

In this paper, we solve these problems by modeling them using a potential game. By using a potential game, we can find at least one optimal solution because we can make the incentives of all users a single function. In particular, we add terms of probability of no packet collision and probability of packet collision to the system throughput equation used in conventional works. Through this approach, we consider the degree of interference. In addition, we propose an AP connection method that maximizes system throughput by extending the adaptive range of potential game players to all users instead of only limited users. By extending the adaptive range of players in the potential game to all users from only limited users, we can eliminate the case where only one user's transmission rate is extremely low. As a result, we can prevent the loss of overall system performance.

From the above discussion, our contribution regards three factors:
\begin{quote}
\begin{itemize}
\item Realistic theoretical analysis by considering both cases in which there is interference and cases in which there is no interference.
\item Proposed method for maximizing system throughput by considering the initial positions of users connected to the same AP.
\item Analysis of optimal user position that maximizes system throughput by considering interference using the potential game.
\end{itemize}
\end{quote}

The structure of this paper is as follows. Section 2 give definitions as a preliminary preparation, Section 3 describes related work on AP connection method, and Section 4 discusses the proposed method in detail, including the utility function. Section 5 presents a numerical analysis, and Section 6 summarizes and discusses the problems and issues in this work.

\section{RELATED WORK}
In general, conventional research for wireless networks aims to improve system throughput. To improve throughput, there is a work field that focuses on AP connection method. We can divide the connection to an AP into two types: methods that connect to  APs without user movement, and methods that connect to APs with user movement.

Regarding methods of connecting to an AP without user movement, Raschellà et al. proposed a method for selecting the appropriate AP by focusing on each user application to increase throughput \cite{7500386}. In addition, Raschellà et al. theoretically prove the existence of a solution by using a potential game \cite{8406147}. However, this method does not consider user movement \cite{7500386}\cite{8406147}. Thus, if a new user arrives at a position far from the AP, the user must connect to the AP at the arrival position. This results in a low transmission rate, which causes performance anomalies. Moreover, Liyanage et al. proposed an appropriate AP connection method that considers interference between users connected to the AP and the distance between the AP and the users \cite{RSSI_AP_selection}. However, this method also does not consider the user's movement. Therefore, it is possible that user interference may increase and throughput may decrease depending on the user's initial position. On the other hand, Liu et al. proposed a method which connects to an AP considering transmission rates of each user in a multi-rate environment, not only RSSI \cite{RSSI_AP_selection}. However, the throughput used in this method assumes no packet collisions between users. In other words, it assumes that there is no interference between users. Thus, this method differs from the real environment in a WLAN environment where multiple users are connected.

Regarding methods of connecting to an AP with user movement, the methods of connection can be divided into two types: indirect connection between the AP and a new user via relaying terminals such as other users or drones.

Regarding indirect connection methods, Yanai et al. optimized the flight paths of drones in a wireless relay network \cite{yanai_drone}. As a result, the transmission delay time was reduced. However, this method does not consider the interference between drones. Moreover, it does not take into account the signal strength between drones. In addition, Xie et al. derived the optimal position between users in an ad hoc network in  consideration of the interference between each user \cite{9136693}. Furthermore, Anjiki et al. considered dynamic routing according to user mobility in ad hoc networks \cite{Ad_hoc_network}. As a result, appropriate user selection and user positioning were analyzed to improve throughput. However, these methods are multi-hop and assume that all nodes can move, whereas WLANs have the limitation of a fixed AP.

Regarding direct connection methods, Miyata et al. proposed a method that improves system throughput on the basis of users' cooperative behavior \cite{miyata_CCNC}\cite{miyata_IEICE}. Yamori et al. proposed a method that improves system throughput by encouraging them to select other APs by encouraging users to move \cite{yamori}. However, these methods have a  problem in that they do not consider interference between users.

\section{PRELIMINARY}
\subsection{HIDDEN TERMINAL PROBLEM}
In a wireless environment, packet collisions occur when two terminals that cannot detect each other's communications send packets to a receiving terminal.

\subsection{SHANNON'S THEOREM}
Shannon's theorem expresses the limit of transmission rate which allows information to be transmitted without error in a steady state. The Shannon's theorem is defined by \cite{Shannon}:
\begin{equation}
R_{X} = W\log_{2}{\{1 + \Gamma_{X}\}},
\label{equ_trans}
\end{equation}
where \(W\) represents the channel bandwidth. Moreover, \(\Gamma_{X}\) is the SINR (Signal to Interference and Noise power Ratio), which is the ratio of a user \(X\)'s own signal power to the interference power.

\subsection{CAPTURE EFFECT}
In general, it is known that an AP can communicate with a user \(X\) without packet errors if the signal power from a user \(X\) received by an AP is stronger than that of another user \cite{capture_effect}\cite{capture_effect_WLAN}. This phenomenon is called the capture effect and can be expressed as SINR must be larger than a threshold, which can be defined as \cite{capture_effect}\cite{capture_effect_WLAN}:
\begin{equation}
\Gamma_{X} \geq \delta.
\label{SINR_equ}
\end{equation}
Let \(\delta\) be a threshold.

\subsection{GAME THEORY}
Game theory is the study of how players such as servers or people interact and make decisions. Players select their own behaviors
to achieve their own objectives in a game. For wireless communication environments where multiple users exist and users must be considered, game theory is one of the best theories to use for analysis \cite{reason_for_game_theory_AP_selection}\cite{reason_for_game_theory_in_wireless_LAN_ver1}\cite{reason_for_game_theory_in_wireless_LAN_ver2}.

Game theory consists of three elements: players, strategies, and utility functions, which can be defined as \cite{game_theory}:
\begin{equation}
\{\mathcal{S},\mathcal{A},(u_{i})_{i \in \mathcal{S}}\}.
\label{equ_game}
\end{equation}
Let \(\mathcal{S} = \{1,...,S\}\) and let \(\mathcal{A}_{i}\) be a set of strategies \(a_{i}\) of each player \(i\). The set of strategies of all players is denoted by \(\mathcal{A}\), and \(\mathcal{A}\) denotes the direct product set as \(\mathcal{A}_{1} \times \cdots \times \mathcal{A}_{S}\). Each element \(\bm {a} = (a_1,a_2,...a_S) \in \mathcal {A}\) is said to be a strategy profile  \cite{BSI_game}. Let \(u_i({\bm a}):\mathcal A \longmapsto \mathbb{R}\) be the utility function when player \(i \in \mathcal S\) takes strategy \(a_{i}\), where \(\mathbb{R}\) is a set of real numbers. All players use this common knowledge to solve the game theory.

\begin{figure}[t]
\centering
\includegraphics[scale=0.25]{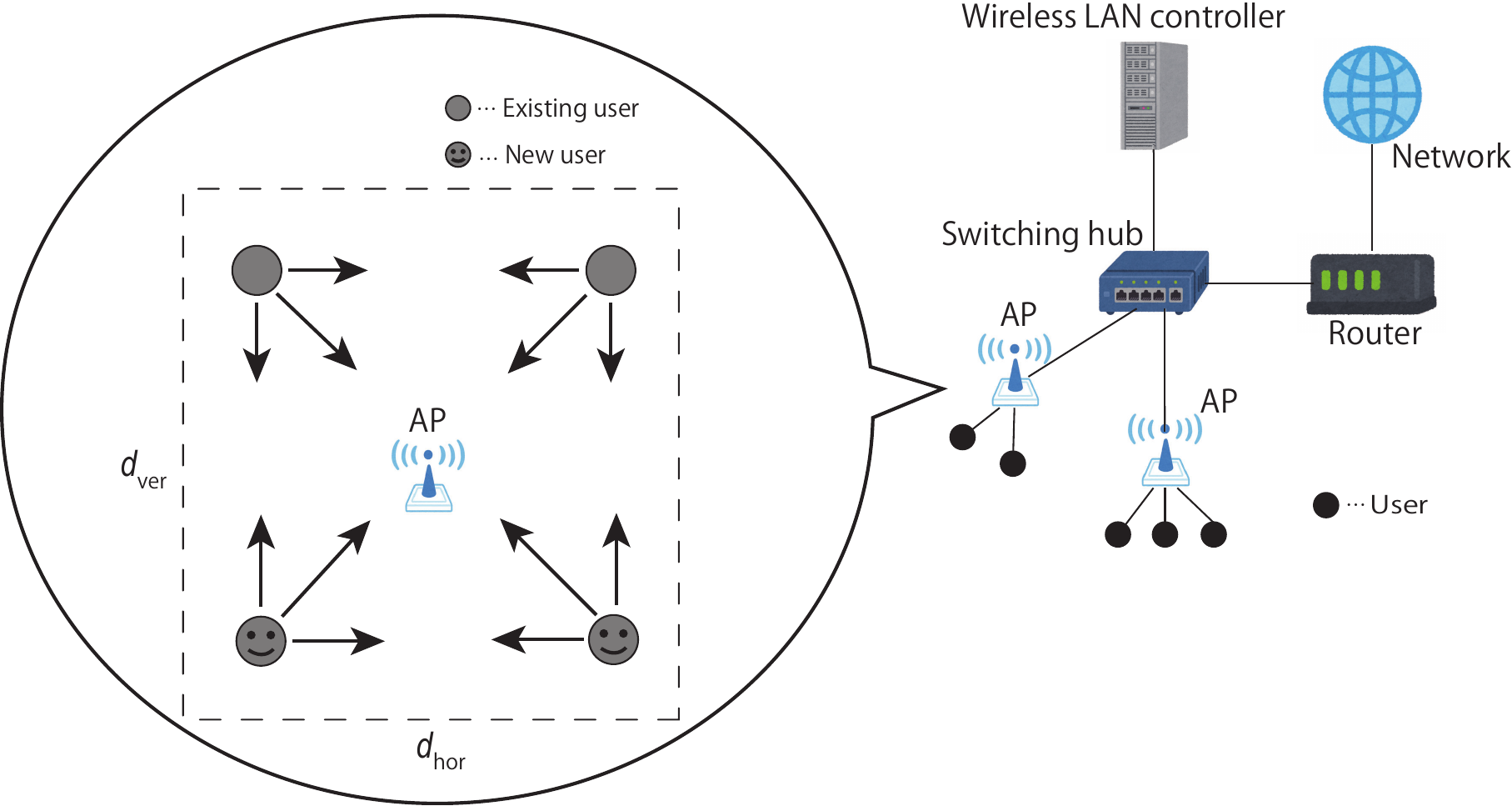}
\caption{System model.}
\label{fig1}
\end{figure}

\subsection{NASH EQUILIBRIUM}
A Nash equilibrium is a solution concept in game theory, that is, a combination of strategies in which no player can obtain a higher utility value by changing his or her strategy. The final objective of this paper is to find this Nash equilibrium.

Let \(\bm{a_{-i}} = (a_{1},...,a_{i-1},a_{i+1},...,a_{S})\) be the strategy except for player \(i\) as a game theory-specific expression. The Nash equilibrium is defined by \cite{game_theory}.
\begin{equation}
u_{i}(a^{*}_{i} , \bm{a^{*}_{-i}}) \geq u_{i}(a'_{i} , \bm{a^{*}_{-i}}), \forall a'_{i} \in \mathcal{A}, \forall i \in \mathcal{S},
\label{equ_utility_1}
\end{equation}
where \(a^{*}_{i}\) represents the strategy that is the Nash equilibrium for player \(i\).

\subsection{BSI GAME (BILATERAL SYMMETRIC INTERACTION GAME)}
A BSI game is one of a class of strategy games. In addition, a game, \(\mathscr{G}_{\mathrm {bsi}} := \{\mathcal{S},\mathcal{A},(u_{i})_{i \in \mathcal{S}}\}\), is a BSI game if its utility function satisfies the following definition \cite{BSI_game}\cite{BSI_game_ver2}.
\begin{equation}
u_{i}(a_{i} , \bm{a_{-i}}) = \sum_{j \in \mathcal {N} \backslash \{i\}} w_{i j}(a_{i} , a_{j}), \forall i \in \mathcal{S}.
\label{equ_bsi}
\end{equation}
Each user \(i,j\) belongs to player \(\mathcal{S}\), and the set of strategies of player \(i\) is \(\mathcal{A_{\mathit{i}}}\). \(w_{i j}\) is a function whose variables are the elements \((a_{i},a_{j}), \,\)\(\mathcal{A_{\mathit{i}}} \times \mathcal{A_{\mathit{j}}}\). Moreover, it satisfies \(w_{i j} (a_{i}, a_{j}) = w_{j i}(a_{j}, a_{i})\) \cite{BSI_game}\cite{BSI_game_ver2}.

\subsection{POTENTIAL GAME}
A potential game is a strategy game \cite{yamamoto_potential}. In general, there may be no Nash equilibrium. Moreover, there may be several Nash equilibriums. However, we can derive a strategy profile for Nash equilibrium \cite{yamamoto_potential}. The game \(\mathscr{G}_{\mathrm{pot}} := \{\mathcal{S},\mathcal{A},(u_{i})_{i \in \mathcal{S}}\}\) is a potential game if a potential function \(\phi(\bm{a}) : \mathcal A \longmapsto \mathbb{R}\) exists such that:
\begin{equation}
\begin{split}
u_{i}(a_{i} , \bm{a_{-i}}) - u_{i}(a'_{i} , \bm{a_{-i}}) &= \phi(a_{i} , \bm{a_{-i}}) - \phi(a'_{i} , \bm{a_{-i}}), \\
& \forall a'_{i} \in \mathcal{A}, \forall i \in \mathcal{S}.
\label{equ_pot}
\end{split}
\end{equation}

\section{PROPOSED METHOD}
\subsection{ASSUMED ENVIRONMENT}

Fig. \ref{fig1} shows our assumed system environment. Here, all users are assumed to arrive in the area of \(d_{\mathrm{ver}} \times \mathit{d}_{\mathrm{hor}}\). Let \(d_{\mathrm{ver}}\) be the vertical length of the area and \(\mathit{d}_{\mathrm{hor}}\) be the horizontal length of the area. Users are managed in a centralized manner by the wireless LAN controller for AP, and when they arrive at this area, information such as the position of all users connected to the same AP and the distance between APs is exchanged \cite{centralized}. Then, on the basis of the information, the optimal user position is calculated and moved using a potential game.

In this paper, we assume that a user requesting a connection arrives at an AP. Moreover, we assume that some users are already connected to the AP. In our proposed system, the AP calculates and selects the users with the highest system throughput when they arrive at the AP. In addition, the AP prompts the selected user to move on. The user will cooperate with the AP, and this is called the user's cooperative behavior. Our -AP connection method proposed in this paper improves system throughput on the basis of the user's cooperative behavior in a multi-transmission rate environment.

\begin{figure}[t]
\centering
\includegraphics[scale=0.25]{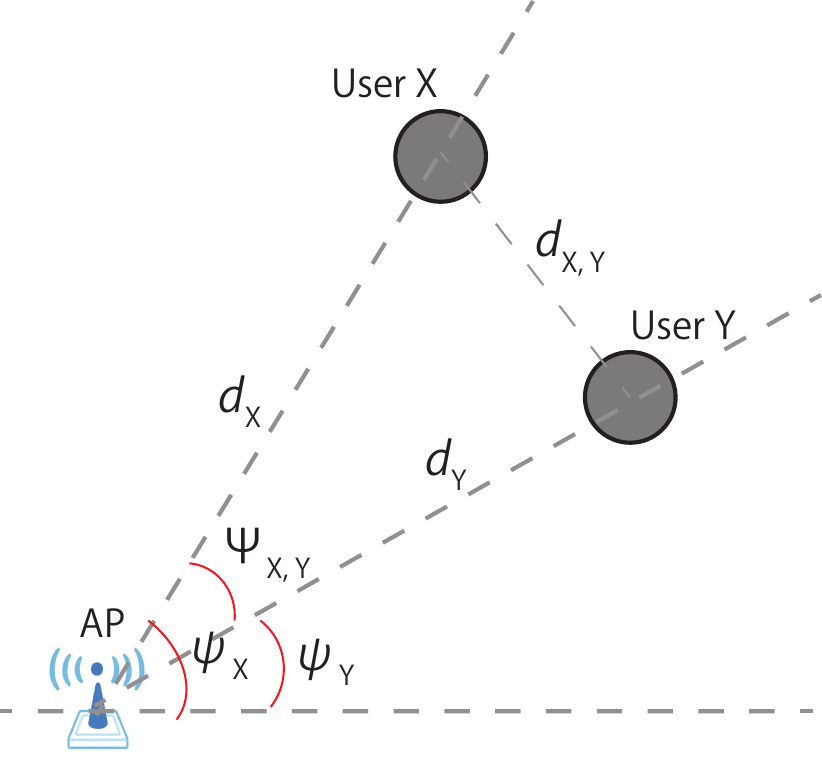}
\caption{User position model.}
\label{User_position}
\end{figure}

We assume an IEEE802.11 WLAN environment as the network environment. However, to clarify the degree of interference and the signal strength between one AP and the users connected to it, we first consider the case of a single AP in this paper. Therefore, we only consider interference received from users connected to the same AP. In other words, the interference in this paper is the effect of packet collisions between users connected to the same AP. 

Moreover, the probability of packet collisions is generally small when the number of users is small because the CSMA/CA is taken in the IEEE802.11 WLAN environment. However, when there are hidden terminals, the probability of packet collisions increases dramatically, which greatly affects the throughput obtained by users \cite{hidden_terminal_main}\cite{hidden_terminal_sub1}\cite{hidden_terminal_sub2}. This hidden terminal problem is also an important issue in real environments \cite{hidden_terminal_main}--\cite{hidden_terminal_sub2}. Therefore, we assume the case in which there are many hidden terminals in this paper.

In this network environment, each user always requests traffic in one direction upstream, and we assume a multi-rate WLAN environment where the appropriate transmission rate is selected from multiple transmission rates. The transmission rate for each user is selected appropriately according to the distance between the AP and the user and the interference from other users.

In a multi-transmission rate environment, the transmission rate is determined by the signal strength and the interference from other users. In addition, the signal strength and the interference are determined by the distance between the user and the AP. In other words, the signal strength and the interference are determined by each user's position. Therefore, the position of users is important in determining the transmission rate for each user. This position of users can be determined by the distance and angle from the AP. Fig. \ref{User_position} shows the user's position model. In this paper, let a user be \(X\). Note that the distance between a user \(X\) - AP is \(d_{X}\) and that the angle between user \(X\) - the horizontal line of AP is \(\psi_{X}\) as shown in Fig. \ref{User_position}. Moreover, the position \(\bm{d}_{X}\) of a user \(X\) can be expressed as \(\bm {d}_{X} =(d_{X}, \psi_{X})\). In particular, let the position of a moving user \(X\) be \(\bm{d}_{X}^{\mathrm{mov}}\). In this paper, we assume that users are moving. Thus, there are two types of users: users who do not move and users who move. Therefore, there are two types of user positions: before the user has moved and after the user has moved. Here, the user \(X\) position before user movement is denoted as \(\bm{\hat{d}}_{X}\), and the user \(X\) position after user movement is denoted as \(\bm{\tilde{d}}_{X} \).

\begin{figure}[t]
  \begin{minipage}[b]{0.49\linewidth}
    \centering
    \includegraphics[scale = 0.25]{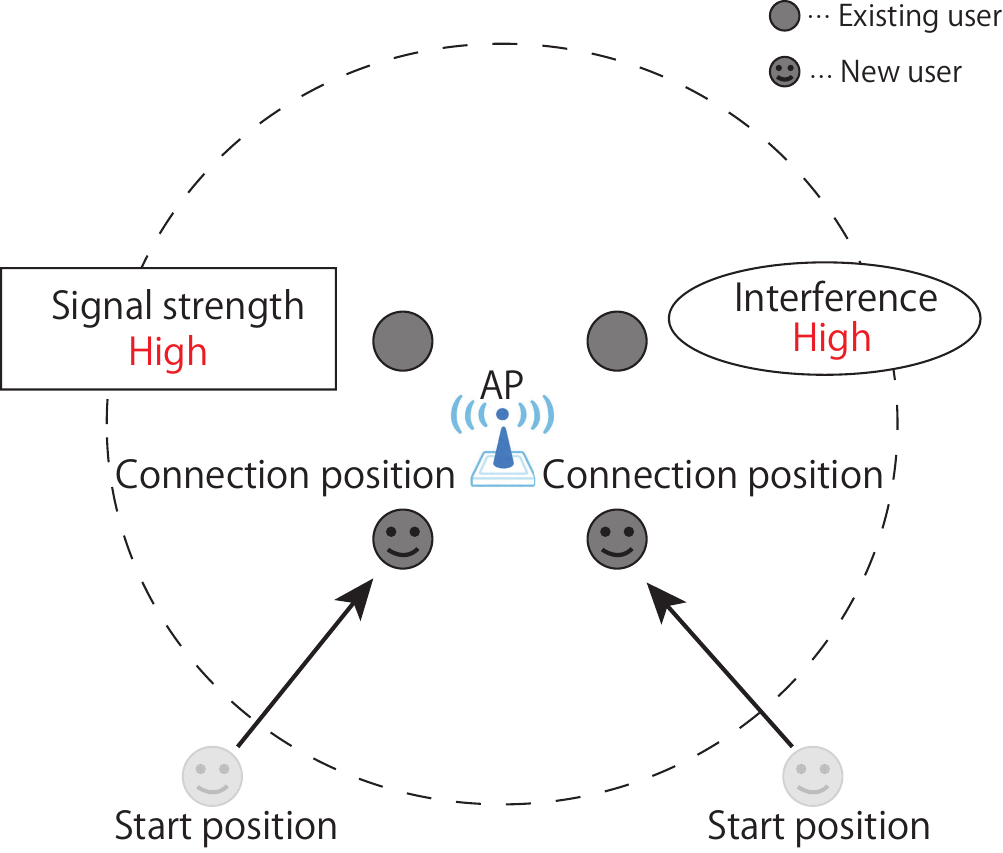}
    \subcaption{If the new user's connection position is close to the AP.}
  \end{minipage}
  \begin{minipage}[b]{0.49\linewidth}
    \centering
    \includegraphics[scale = 0.25]{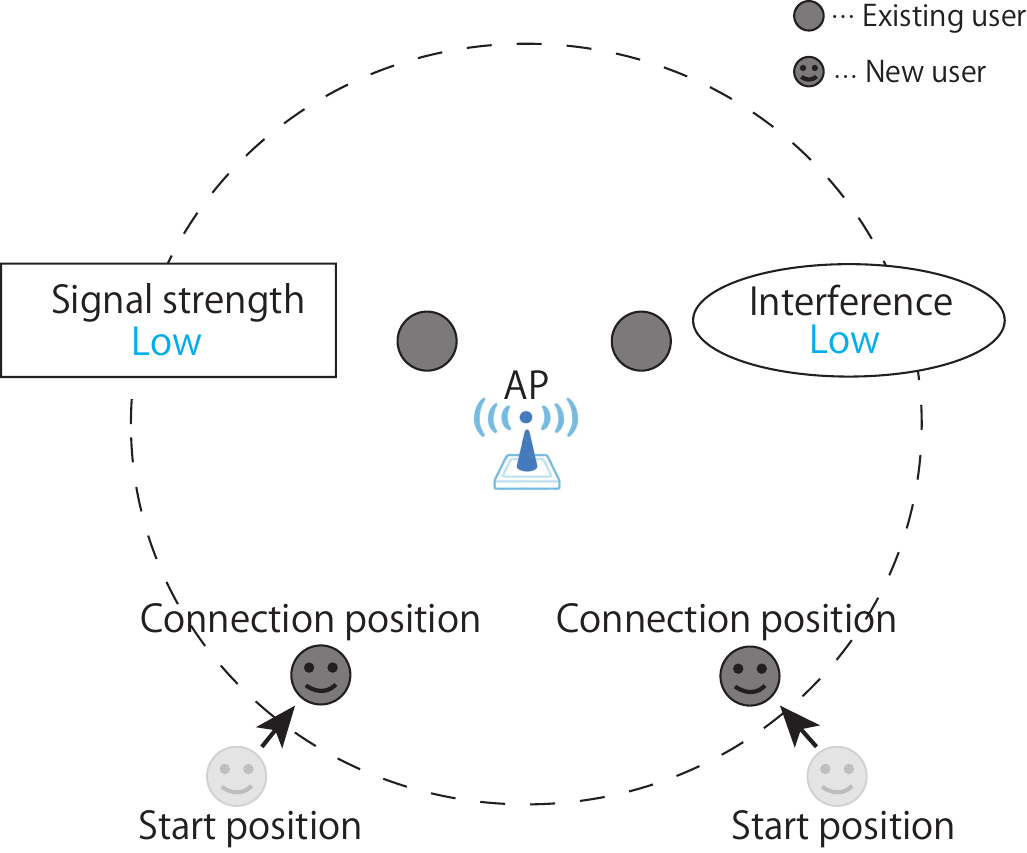}
    \subcaption{If the new user's connection position is far from the AP.}
  \end{minipage}
  \caption{Trade-off diagram between user-AP distance and interference between users.}
  \label{fig1_1}
\end{figure}

In a multi-transmission rate environment, when multiple users connect to the same AP, the transmission rate is determined by the distance \(d_{X}\) between the AP - a user \(X\) and the distance \(d_{Y}\) between the AP - the other user \(Y\). If the distance \(d_{X}\) between the AP - a user \(X\) is short, the transmission rate will be high because the signal strength will be high. However, as shown Fig. \ref{fig1_1}(a), when users move closer to the AP in order to increase the signal strength, the distance \(d_{Y}\) between the AP - the other user \(Y\) becomes shorter. As a result, the transmission rate will be low because the interference is larger when the distance \(d_{Y}\) between the AP - the other user \(Y\) is shorter. On the other hand, as shown Fig. \ref{fig1_1}(b), if the distance \(d_{X}\) between the AP - a user \(X\) is long in order to decrease the interference, the transmission rate will be low because the signal strength will be low. Thus, there is a trade-off between the distance \(d_{X}\) between an AP - a user \(X\) and the distance \(d_{Y}\) between the AP - the other user \(Y\).

In the conventional methods \cite{kato_CCNC}, there are users who were already connected to the AP (existing users) and users who newly request a connection (new users). In addition, the conventional method \cite{kato_CCNC} solves this trade-off by moving the new users to the new location. However, as shown in Fig. \ref{IN_punch}, if one existing user connects close to the AP and the other existing user connects away from the AP, there is a possibility that the high system throughput will not be achieved. This is because the conventional method \cite{kato_CCNC} assumes that only new users move. Thus, even if a new user moves to the optimal position,  as shown in Fig. \ref{IN_punch}, the transmission rate will be low because only one existing user is connected away from the AP. Therefore, in this paper, we solve this problem by extending the players in the potential game not only to new users but also to all users.

\subsection{MODELING BY GAME THEORY}
The description of each parameter in the game theory model is shown in Table \ref{Description}.
\begin{table}[t]
  \caption{Notation for our method.}
  \label{Description}
  \centering
  \scalebox{0.75}{ 
  \begin{tabular}{c||c}
    \hline \hline
    Parameters  & Description  \\
    \hline \hline
    $\mathcal{S}$ & Set of players in game theory \\ \hline
    $a_{i}$ & Strategy of player $i$ \\ \hline
    $u_{i}$ & Utility function for player $i$ \\ \hline
    $\mathcal{A}$ & Direct product set of all players' strategies \\ \hline
    $\bm{a_{-i}}$ & Set of strategies for player $j$, $j \in \mathcal{S}$, $j \neq i$ \\ \hline
    $\mathscr{G}_{\mathrm{pot}}$ & Potential game \\ \hline
    $\mathscr{G}_{\mathrm{all \, mov}}$ & All user movement game \\ \hline
    $a^{*}_{i}$ & Strategies that are Nash equilibrium for player $i$ \\ \hline
    $n$ & Moving user \\ \hline
    $\mathcal{N}$ & Set of moving users \\ \hline
    $m$ & Non-moving user \\ \hline
    $\mathcal{M}$ & Set of non-moving users \\ \hline
    $\mathcal{L}$ & Set of all users \\ \hline
    $\bm{d}_{X}$ & Position of a user $X$ \\ \hline
    $d_{X}$ & Distance between AP and a user $X$ \\ \hline
    $\psi_{X}$ & Angle between the horizontal line of AP and a user $X$ \\ \hline
    $\bm{c}^{\mathrm{mov}}$ & Combination of selected moving users \\ \hline
    $R_{X}$ & Effective transmission rate for a user $X$ \\ \hline
    $\theta$ & System throughput \\ \hline
    $W$ & Channel bandwidth \\ \hline
    $E_{X}$ & SNR (Signal to Noise Ratio) for a user $X$ \\ \hline
    $\Gamma_{X}$ & SINR (Signal to Interference and Noise Ratio) for a user $X$ \\ \hline
    $P^{\mathrm{non}}_{\mathrm{collision}}$ & Probability of non-collision \\ \hline
    $P_{\mathrm{collision}}$ & Probability of collision \\ \hline
    $P^{\mathrm{receive}}_{X}$ & Received power of an AP when a user \(X\) communicates \\ \hline
    $I_{Y}$ & Interference power from users except for a user $Y \in \mathcal{L}$, $Y \neq X$ \\ \hline
    $N_{0}$ & Background noise power \\ \hline
    $\delta$ & SINR threshold \\ \hline
    $P^{\mathrm{send}}_{\mathrm{AP}}$ & Each user's transmit power \\ \hline
    $g$ & Antenna gain \\ \hline
    $\tilde{d}_{X}^{\mathrm{mov}}$ & Distance between AP and a user $X$ after moving \\ \hline
    $\alpha$ & Path loss exponent \\ \hline
    $\bm{\tilde{d}}^{\mathrm{mov}^{*}}_{i}$ & Optimal position of a moving user $i$ \\ \hline
    $\hat{u_{i}}$ & Intermediate utility function \\
    \hline
  \end{tabular}
}
\end{table}
In this paper, the user position that maximizes system throughput is determined using a potential game, which is one of the models of game theory. A potential game has the property of having at least one solution \cite{yamamoto_potential}. This solution is called the Nash equilibrium. By using this property, the system heuristically calculates combinations of user positions. A potential game consists of three elements: the players \(\mathcal{S} = \{1,...,S\}\), the strategy \(a_{i}\) of player \(i\), and the utility function \(u_i({\bm a}):\mathcal A \longmapsto \mathbb{R}\) of player \(i\). Let \(\mathbb{R}\) be a set of real numbers. In this paper, we model our method using the potential game \(\mathscr{G}_{\mathrm {pot}}\). Moreover, the potential game \(\mathscr{G}_{\mathrm {pot}}\) is denoted as all user movement game \(\mathscr{G}_{\mathrm{all \, mov}} := \{\mathcal{S},\mathcal{A},(u_{i})_{i \in \mathcal{S}}\}\). Finally, the objective of this paper is to find the Nash equilibrium. Here, let strategies that are Nash equilibria for the player \(i\) be \(a^{*}_{i}\), and the Nash equilibrium be defined as in Eq. (\ref{equ_utility}) \cite{game_theory}.
\begin{equation}
u_{i}(a^{*}_{i} , \bm{a^{*}_{-i}}) \geq u_{i}(a'_{i} , \bm{a^{*}_{-i}}), \forall a'_{i} \in \mathcal{A}, \forall i \in \mathcal{S}.
\label{equ_utility}
\end{equation}

\begin{figure}[t]
\centering
\includegraphics[scale=0.27]{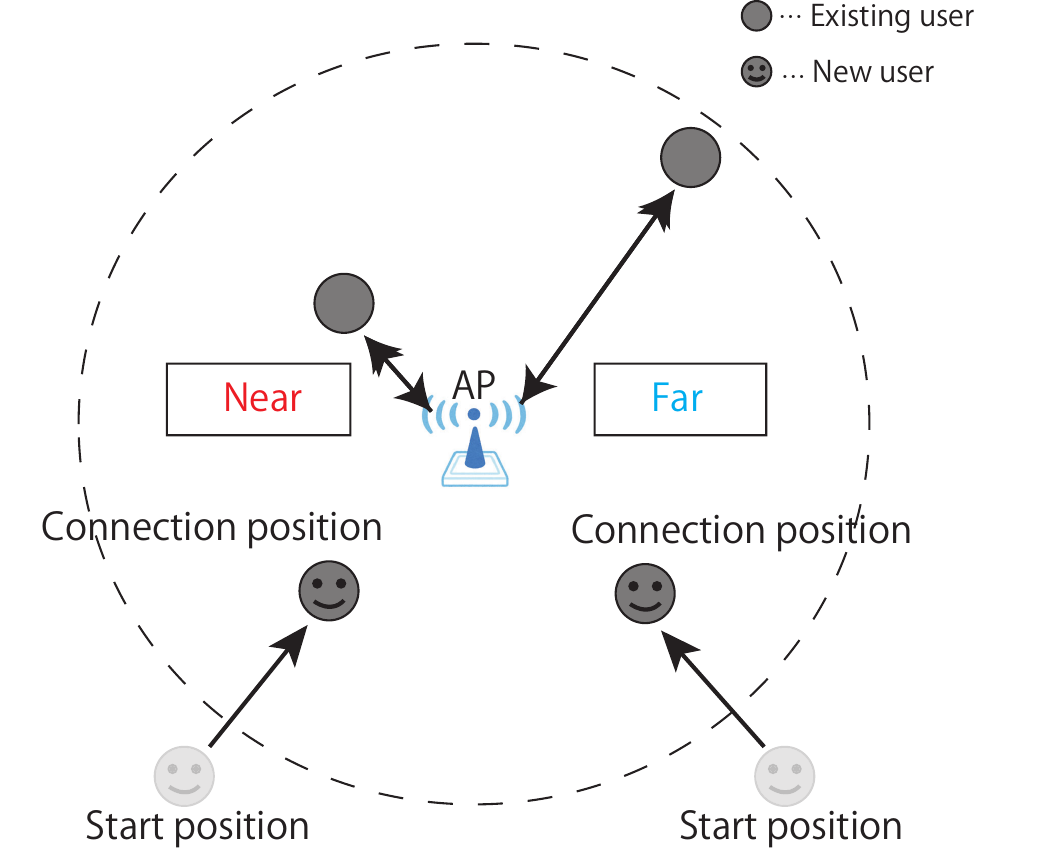}
\caption{Problem of the conventional method \cite{kato_CCNC}.}
\label{IN_punch}
\end{figure}

In modeling the potential game, we assume that there are two types of users: moving users and non-moving users. This is to simplify the problem. Therefore, the AP selects users to move and users not to move from all users connected to the AP. Let moving users be \(n \in \{1,2, \ldots ,N\} = \mathcal{N}\), non-moving users be \(m \in \{1,2, \ldots ,M\} = \mathcal{M}\), the set of all users be \(\mathcal{L} = \{1,2, \ldots ,L\}\). Moreover, we assume that combination of selected moving users are \(\bm{c}^{\mathrm{mov}} = (i,j), \, i,j \in \mathcal{L}\).

Adapting our method to the potential game, we assume that player \(\mathcal{S}\) is a set of moving users \(i \in \mathcal {N}\) and that strategy \(\mathcal{A}\) is the distance \(d_{i}\) and direction \(\psi_{i}\) of the moving users, and that the utility function \(u_{i}\) be defined as in Eq. (\ref{equ_ritoku}).
\begin{equation}
u_{i}(a_{i},\bm{a_{-i}}) =  \frac{1}{\sum_{n \in \mathcal{N}} \frac{1}{R_{n}(a_{n},\bm{a_{-n}})} + \sum_{m \in \mathcal{M}} \frac{1}{R_{m}(a_{i},\bm{a_{-i}})}},
\label{equ_ritoku}
\end{equation}
where \(R_{n}\) is the effective transmission rate obtained by a moving user and \(R_{m}\) is the effective transmission rate obtained by a non-moving user. Note that Eq. (\ref{equ_ritoku}) can consider the throughput of moving user \(i\). Thus, the system throughput \(\theta\), which means the sum of users' throughputs, is follows,
\begin{equation}
\theta = (N + M)u_{i}, i \in \mathcal {N}.
\label{equ_sys_and_thr}
\end{equation}
In particular, let system throughput when a user \(X\) and a user \(Y\) move be \(\theta_{X \, Y}\). 

From Eqs. (\ref{equ_ritoku})(\ref{equ_sys_and_thr}), maximizing \(u_{i}\) is equivalent to maximizing \(\theta\). Thus, the objective of this paper can be achieved maximizing the system throughput by maximizing the throughput of each user. Furthermore, Eq. (\ref{equ_sys_and_thr})(\ref{equ_imp}) are used as an evaluation basis. 

Here, the effective transmission rate \(R_{X}\) for a user \(X\) can be expressed from Shannon's theorem as Equ. (\ref{equ_trans}) \cite{Shannon}. However, Shannon's theorem assumes that the effect of interference from other users is constant. Therefore, because we assume a wireless LAN environment in this paper, there are cases in which packet collisions do not occur due to the CSMA/CA. In other words, if packet collisions do not occur, we do not need to consider the effects of interference. Thus, adapting Shannon's theorem to this paper, the effective transmission rate \(R_{X}\) for a user \(X\) can be expressed as follows,
\begin{equation}
\begin{split}
R_{X}&(a_{i},\bm{a_{-i}}) = P^{\mathrm{non}}_{\mathrm{collision}} \times W\log_{2}{\{1 + E_{X}(a_{i})\}}\\
& + P_{\mathrm{collision}} \times W\log_{2}{\{1 + \Gamma_{X}(a_{i},\bm{a_{-i}})\}}, i \in \mathcal{N}, X \in \mathcal{L}.
\label{equ_trans_new}
\end{split}
\end{equation}
Let \(E_{X}\) be the SNR (Signal to Noise Ratio) of a user \(X\), \(P^{\mathrm{non}}_{\mathrm{collision}}\) be probability that packets will not be collisions, \(P_{\mathrm{collision}}\) be probability of packet collisions. Moreover, \(\Gamma_{X}\) is the SINR (Signal to Interference and Noise Ratio), which is the ratio of a user \(X\)'s own signal power to the interference power; let \(\Gamma_{X}\) be the following Eq. (\ref{equ_sinr}) \cite{SINR}.
\begin{equation}
\Gamma_{X}(a_{i},\bm{a_{-i}}) = \frac{P^{\mathrm{receive}}_{X}(a_{i})}{\sum_{X \neq Y , Y \in \mathcal{L}} I_{Y}(a_{i},\bm{a_{-i}}) + N_{0}}, i \in \mathcal{N}, X \in \mathcal{L},
\label{equ_sinr}
\end{equation}
where \(P^{\mathrm{receive}}_{X}\) is the received power of an AP when a user \(X\) communicates with the AP, and \(I_{X}\) is the interference power from users except for a user \(X\). Moreover, let \(N_{0}\) be the background noise power. This SINR \(\Gamma_{X}\) is defined to exceed a threshold value \(\delta\) as shown by Eq. (\ref{SINR_equ}) in order to consider capture effects.

\(P^{\mathrm{receive}}_{X}\) is obtained by each user's transmit power \(P^{\mathrm{send}}_{\mathrm{AP}}\) \cite{9136693};
\begin{equation}
P^{\mathrm{receive}}_{X}(a_{i}) = \frac{gP^{\mathrm{send}}_{\mathrm{AP}}}{\{\tilde{d}_{X}^{\mathrm{mov}}(a_{i})\}^\alpha}, i \in \mathcal{N}, X \in \mathcal{L},
\label{equ_power}
\end{equation}
where \(g\) is the antenna gain. In this paper, users are assumed to move. This is due to the fact that the communicating target is different from the conventional method \cite{9136693}. The conventional method \cite{9136693} assumes an ad hoc network, where users communicate with each other. Thus, \(\tilde{d}_{X}^{\mathrm{mov}}\) is the distance between users. However, we assume an AP connection method. Therefore, communication is between a user and an AP. Thus, the distance \(\tilde{d}_{X}^{\mathrm{mov}}\) between a moving user after moving and an AP is used. In addition, let \(\alpha\) be the path loss exponent. This  exponent varies depending on the surrounding environment and is usually \(2 < \alpha \leq 4\) \cite{stepanov2005impact}.

Here, the effective transmission rate \(R_{X}\) of a user \(X\) can be approximated by Eq. (\ref{equ_trans_kinji}) \cite{9136693};
\begin{equation}
R_{X}(a_{i},\bm{a_{-i}}) \approx W\log_{2}{\{\Gamma_{X}(a_{i},\bm{a_{-i}})\}}, i \in \mathcal{N}, X \in \mathcal{L}.
\label{equ_trans_kinji}
\end{equation}
Moreover, the interference power \(I_{Y}\) can be expressed in the same way as in Eq. (\ref{equ_power}).

The combination of strategies that satisfy Eq. (\ref{equ_utility}) using Eq. (\ref{equ_ritoku}) calculated from (\ref{equ_trans})--(\ref{equ_power}), that is, the Nash equilibrium, derives the optimal position of the new user \(\bm{\tilde{d}}^{\mathrm{new}^{*}}_{i} = (\tilde{d}_{i}^{\mathrm{new}^{*}}, \tilde{\psi}_{i}^{\mathrm{new}^{*}})\).

Next, we prove that the all user movement game \(\mathscr{G}_{\mathrm{all \, mov}}\) is a potential game \(\mathscr{G}_{\mathrm{pot}}\) by using the BSI game. However, it is difficult to prove directly that the proposed system is a potential game using the utility function \(u_{i}\) in Eq. (\ref{equ_ritoku}) because of a difference where the sigma terms are in fractions of this equation when compared with Eq. (\ref{equ_bsi}) defined for BSI games. Thus, we redefine the utility function \(\hat{u_{i}}\), which is transformed such that the sigma term in the utility function \(u_{i}\) does not includ fractions. Define the utility function \(\hat{u_{i}}\) as:
\begin{equation}
\hat{u_{i}}(a_{i} , \bm{a_{-i}}) = \sum_{n \in \mathcal{N}} \frac{-1}{R_{n}(a_{n} , \bm{a_{-n}})} + \sum_{m \in \mathcal{M}} \frac{-1}{R_{m}(a_{j}, \bm{a_{-j}})}.
\label{equ_newritoku}
\end{equation}
Therefore, the relationship between \(u_{i}\) and \(\hat{u_{i}}\) can be expressed as Eq. (\ref{equ_relation}),
\begin{equation}
u_{i} = \frac{-1}{\hat{u_{i}}}.
\label{equ_relation}
\end{equation}
From Eq. (\ref{equ_relation}), the parameter that depends on \(u_{i}\) is \(\hat{u_{i}}\). Thus, maximizing \(\hat{u_{i}}\) is equivalent to maximizing \(u_{i}\). As a result, solving all user movement game \(\hat{\mathscr{G}}_{\mathrm {all \, mov}}\) solves all user movement game \(\mathscr{G}_{\mathrm {all \, mov}}\). Therefore, there is no problem if the utility function \(u_{i}\) changes to \(\hat{u_{i}}\).

\subsection{PROOF OF POTENTIAL GAME}
\begin{figure}[!t]
\begin{algorithm}[H]
\caption{Spatial adaptive play (SAP) algorithm. }
\label{alg1}
\begin{algorithmic}[1]	
\STATE Initially, we set \(k=0\).
\STATE All moving users collect information about the opponent's strategy.
\STATE The players are randomly selected and denoted by moving user \(i, i \in \mathcal{N}\) with the moving user's position \(\bm{\hat{d}}^{\mathrm {new}}_{i}[k]\).
\STATE Moving user \(i\) calculates the utility value \(\hat{u_{i}}(a_{i}[k+1], \bm{a_{-i}}[k])\) for all available strategies on the basis of the received strategies of the opponent.
\STATE Moving user \(i\) updates his/her position strategy according to Eq. (\ref{equ_max}), and if there is a position that satisfies Eq. (\ref{equ_max}), the moving user's position \(\bm{\hat{d}}^{\mathrm {new}}_{i} [k]\) is updated.
\STATE If there is the same maximum utility, update his/her position \(\bm{\hat{d}}^{\mathrm {new}}_{i} [k]\) to the closer one from the current position.
\STATE These steps stop if the step count \(k \) reaches a certain value; let \(\bm{\hat{d}}^{\mathrm {new}}_{i} [k]\) be \(\bm{\tilde{d}}^{\mathrm{new}^{*}}_{i}\), otherwise, they return to step 2.
\end{algorithmic}
\end{algorithm}
\end{figure}

\begin{figure}[!t]
\begin{algorithm}[H]
\caption{All user movement game algorithm.}
\label{alg2}
\begin{algorithmic}[1]
\REQUIRE{\(\bm{d}_{X}, \, X \in \mathcal{L}, \bm{d}_{Y}, \, Y \in \mathcal{L}\)}
\ENSURE{\(\bm{\tilde{d}}^{\mathrm{mov}^{*}}_{i}\)}
\STATE Initially, new users arrive in a limited square region, and we set \(\theta=0\).
\FOR{\(X=1\, \ldots \,L\)}
\FOR{\(Y=1\, \ldots \,L\)}
\IF{\(X=Y\)}
\STATE break;
\ELSIF{\(X \neq Y\)}
\STATE Users \(X,Y\) are denoted as moving users \(\bm{c}^{\mathrm{mov}} = (X,Y), \, X \neq Y, \, X,Y \in \mathcal{L}\).
\STATE Moving users \(\bm{c}^{\mathrm{mov}} = (X,Y), \, X \neq Y, \, X,Y \in \mathcal{L}\) run the SAP algorithm \cite{9136693} based on the position of user \(\bm{d}_{X}, \, X \in \mathcal{L}, \bm{d}_{Y}, \, Y \in \mathcal{L}\). They calculate the system throughput \(\theta_{X \, Y}\) and the position \(\bm{\tilde{d}}^{\mathrm{mov}}_{i} = \mathrm{argmax}\,\theta_{X \, Y}\).
\IF{\(\theta_{X \, Y} > \theta\)}
\STATE let \(\theta\) be \(\theta_{X \, Y}\) and let \(\bm{\tilde{d}}^{\mathrm{mov}^{*}}_{i}\) be \(\bm{\tilde{d}}^{\mathrm{mov}}_{i}\).
\ENDIF
\ENDIF
\ENDFOR
\ENDFOR
\end{algorithmic}
\end{algorithm}
\end{figure}

In the following, we prove that the all user movement game \(\mathscr{G}_{\mathrm{all \, mov}}\) is a potential game \(\mathscr{G}_{\mathrm{pot}}\) defined by Eq. (\ref{equ_pot}). Currently, the BSI game is sufficient for a potential game \cite{yamamoto_potential}. In addition, \(\mathscr{G}_{\mathrm{all \, mov}}\) and \(\hat{\mathscr{G}}_{\mathrm {all \, mov}}\) are equivalent. Thus, to prove that it is a potential game, we first prove that the all user movement game \(\hat{\mathscr{G}}_{\mathrm {all \, mov}}\) is a BSI game \(\mathscr{G}_{\mathrm{bsi}}\).

\begin{thm}
The all user movement game \(\mathscr{G}_{\mathrm{all \, mov}}\) is a potential game.
\end{thm}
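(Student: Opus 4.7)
The plan is to use the path $\mathrm{BSI}\Rightarrow\mathrm{potential}$ outlined in Section~3.7, applied to the transformed game $\hat{\mathscr{G}}_{\mathrm{all\,mov}}$ of Eq.~(\ref{equ_newritoku}), and then pull the conclusion back to $\mathscr{G}_{\mathrm{all\,mov}}$ via the monotone equivalence $u_i=-1/\hat{u}_i$ in Eq.~(\ref{equ_relation}). So the first move is to work with $\hat{u}_i$ rather than $u_i$, exactly as the authors motivate at the end of their modelling subsection, reducing the problem to showing that $\hat{\mathscr{G}}_{\mathrm{all\,mov}}$ is a BSI game.

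To show $\hat{\mathscr{G}}_{\mathrm{all\,mov}}$ is BSI I would exhibit a symmetric pairwise function $w_{ij}(a_i,a_j)$ such that $\hat{u}_i(a_i,\bm{a_{-i}})=\sum_{j\in\mathcal{L}\setminus\{i\}} w_{ij}(a_i,a_j)$ with $w_{ij}(a_i,a_j)=w_{ji}(a_j,a_i)$. Starting from the approximation $R_X\approx W\log_2\Gamma_X$ of Eq.~(\ref{equ_trans_kinji}), the SINR $\Gamma_X$ is a ratio whose numerator $P^{\mathrm{receive}}_X=gP^{\mathrm{send}}_{\mathrm{AP}}/(\tilde d_X^{\mathrm{mov}})^\alpha$ depends on $a_X$ alone and whose denominator is an additive sum of per-interferer terms, each depending on a single other user's position. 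The crucial structural observation is that the right-hand side of Eq.~(\ref{equ_newritoku}) is the same function of the full profile $\bm{a}$ for every $i$; it simply totals $-1/R_X$ over all $X\in\mathcal{L}$. I would therefore split the contribution of each unordered pair $\{i,j\}$ into a symmetric piece $w_{ij}$ and verify $w_{ij}(a_i,a_j)=w_{ji}(a_j,a_i)$ from the fact that the path-loss law $gP^{\mathrm{send}}_{\mathrm{AP}}/d^\alpha$ governing the interference $i$ presents to $j$ has the same functional form as the one governing the interference $j$ presents to $i$.

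The main obstacle is that $-1/R_X$ is a reciprocal of a logarithm of a sum, so the interferers do not algebraically separate into a clean pairwise product or sum. I expect to bridge this by exploiting the capture threshold of Eq.~(\ref{SINR_equ}) --- which pins $\Gamma_X$ to an operating regime where the log can be linearised --- and by arguing at the level of deviation differences $\hat{u}_i(a_i,\bm{a_{-i}})-\hat{u}_i(a_i',\bm{a_{-i}})$: since only the terms involving $a_i$ (via $R_i$ directly and via the interference $i$ exerts on every other user's $R_Y$) change, the difference collapses to a sum over $j\neq i$ of symmetric pair updates, which is exactly the structural requirement of Eq.~(\ref{equ_bsi}).

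Finally, once $\hat{\mathscr{G}}_{\mathrm{all\,mov}}$ is shown to be BSI, the result cited at the start of Section~3.8 provides a potential function $\phi$ satisfying Eq.~(\ref{equ_pot}) for $\hat{u}_i$. To lift to the original $\mathscr{G}_{\mathrm{all\,mov}}$, observe that $R_X>0$ gives $\hat{u}_i<0$, so the map $x\mapsto -1/x$ is strictly increasing on the range of $\hat{u}_i$; hence any unilateral improvement in $\hat{u}_i$ is also an improvement in $u_i$, making $\phi$ (at worst an ordinal) potential for $\mathscr{G}_{\mathrm{all\,mov}}$ and yielding existence of a pure-strategy Nash equilibrium, which is the payoff of the theorem.
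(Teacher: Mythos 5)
Your high-level route (pass to $\hat{u}_i$, show the game is BSI, invoke BSI $\Rightarrow$ potential, pull the conclusion back through $u_i=-1/\hat{u}_i$) matches the paper's, but the central step is carried out differently and, as proposed, does not go through. You aim to exhibit a genuine pairwise decomposition $\hat{u}_i(a_i,\bm{a_{-i}})=\sum_{j\in\mathcal{L}\setminus\{i\}} w_{ij}(a_i,a_j)$ in which each $w_{ij}$ depends only on the two strategies $a_i,a_j$. No such decomposition exists here: each term $-1/R_X$ is the reciprocal of a logarithm whose argument contains the sum of interference powers over \emph{all} other users, so it is a jointly coupled function of the whole profile rather than a sum of two-argument pieces. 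Neither of your proposed bridges repairs this. Linearising the logarithm via the capture threshold still leaves a term of the form $\log\bigl(\sum_{Y} I_Y + N_0\bigr)$, which is not separable into per-pair contributions; and the ``deviation difference'' argument addresses the potential-game identity of Eq.~(\ref{equ_pot}) rather than the structural requirement of the BSI definition in Eq.~(\ref{equ_bsi}), which constrains the form of the utility itself, not of its differences. You have also taken the index set of the decomposition to be $\mathcal{L}\setminus\{i\}$, but the players of $\mathscr{G}_{\mathrm{all\,mov}}$ are only the moving users $\mathcal{N}$; the non-moving users are not players and do not appear in the BSI sum.

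The paper's proof avoids the separability problem entirely by exploiting two facts you did not use: (i) the number of moving users is fixed at $N=2$, so the BSI sum $\sum_{j\in\mathcal{N}\setminus\{i\}} w_{ij}$ contains exactly one term and one may simply \emph{define} $w_{ij}$ to be the entire utility $\hat{u}_i$; and (ii) both players share the same utility (the negated harmonic-type sum over everyone's rate), so the required symmetry $w_{ij}(a_i,a_j)=w_{ji}(a_j,a_i)$ reduces to checking $R_m(a_i,a_j)=R_m(a_j,a_i)$ for each non-moving user $m$, which holds because $R_m$ depends on the two moving positions only through the unordered pair. No pairwise separation of the interference is ever needed. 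One point in your favour: you correctly flag that $u_i=-1/\hat{u}_i$ is only a monotone transformation on the range $\hat{u}_i<0$, so the exact potential obtained for $\hat{u}_i$ becomes an ordinal potential for $u_i$; the paper asserts the equivalence of the two games without this caveat, and your version is the more careful statement of what is actually preserved.
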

\begin{proof}
Replacing \(u_{i}\) in Eq. (\ref{equ_bsi}) with \(\hat{u}_{i}\), the following is obtained;
\begin{equation}
\hat{u_{i}}(a_{i} , \bm{a_{-i}}) = \sum_{j \in \mathcal{N} / \{i\}} w_{i j}(a_{i} , a_{j}), \forall i \in \mathcal{N}.
\label{equ1_syomei}
\end{equation}
Let \(N\) be the number of moving users; then, the proposed system assumes \(N = 2\). In this chapter and the following, let the number of moving users \(N\) be \(N = 2\). Thus, the above equation can be rewritten as Eq. (\ref{equ2_syomei}).
\begin{equation}
\sum_{j \in \mathcal{N} / \{i\}} w_{i j}(a_{i} , a_{j}) = w_{i j}(a_{i} , a_{j}), j \in \mathcal{N} \backslash {\{i\}}.
\label{equ2_syomei}
\end{equation}
To make \(w_{i j}\) similar to the utility function \(\hat{u_{i}}\), replace \(w_{i j}\) with the following.
\begin{equation}
w_{i j}(a_{i} , a_{j}) = \sum_{k \in \mathcal{N}} \frac{-1}{R_{k}(a_{k} , \bm{a_{-k}})} + \sum_{l \in \mathcal{M}} \frac{-1}{R_{l}(a_{i}. \bm{a_{-i}})}, j \in \mathcal{N} \backslash {\{i\}}
\label{equ3_syomei}
\end{equation}
The utility function \(\hat{u_{i}}\) of Eq. (\ref{equ1_syomei}) is as:
\begin{equation}
\begin{split}
\hat{u_{i}}(a_{i} , \bm{a_{-i}}) &= \sum_{n \in \mathcal{N}} \frac{-1}{R_{n}(a_{n} , \bm{a_{-n}})}\\ 
&+ \sum_{m \in \mathcal{M}} \frac{-1}{R_{m}(a_{i} , \bm{a_{-i}})}, j \in \mathcal{N} \backslash {\{i\}}.
\label{equ4_syomei}
\end{split}
\end{equation}
To prove the BSI game, it is necessary to check that the function \(w_{i j}\) satisfies \(w_{i j} (a_{i}, a_{j}) = w_{j i}(a_{j}, a_{i})\).
\begin{equation}
\begin{split}
w_{i j} (a_{i}, a_{j}) &= \sum_{n \in \mathcal{N}} \frac{-1}{R_{n}(a_{n} , \bm{a_{-n}})}\\ 
&+ \sum_{m \in \mathcal{M}} \frac{-1}{R_{m}(a_{i} , \bm{a_{-i}})}, j \in \mathcal{N} \backslash {\{i\}}, i \in \mathcal{N} \backslash {\{j\}},
\label{equ5_syomei}
\end{split}
\end{equation}
\begin{equation}
\begin{split}
w_{j i} (a_{j}, a_{i}) &= \sum_{n \in \mathcal{N}} \frac{-1}{R_{n}(a_{n} , \bm{a_{-n}})}\\ 
&+ \sum_{m \in \mathcal{M}} \frac{-1}{R_{m}(a_{j} , \bm{a_{-j}})}, j \in \mathcal{N} \backslash {\{i\}}, i \in \mathcal{N} \backslash {\{j\}},
\label{equ6_syomei}
\end{split}
\end{equation}
where functions \(w_{i j}\) and \(w_{j i}\) are Eqs. (\ref{equ5_syomei}) and (\ref{equ6_syomei}), respectively, since the second term on the right hand side is different from Eqs. (\ref{equ5_syomei})(\ref{equ6_syomei}). If \(R_{m} (a_{i}, \bm{a_{-i}}) = R_{m} (a_{j} , \bm{a_{-j}})\) holds, then the function \(w_{i j}\) also holds. It can be verified that \(w_{i j} (a_{i}, a_{j}) = w_{j i}(a_{j}, a_{i})\) is satisfied of condition of BSI game. In this system model, \(i\) and \(j, j \in \mathcal{N} \backslash {\{i\}}, i \in \mathcal{N} \backslash {\{j\}}\), which correspond to moving users, are only two moving users, and they interact with each other because they interfere with each other \cite{9136693}. Thus, \(R_{m} (a_{i}, a_{j}) = R_{m} (a_{j}, a_{i}), \forall m \in \mathcal{M}\). As a result, from Eqs. (\ref{equ4_syomei})(\ref{equ5_syomei})(\ref{equ6_syomei}), we can say that the all user movement game \(\hat{\mathscr{G}}_{\mathrm {all \, mov}}\) is a BSI game \(\mathscr{G}_{\mathrm{bsi}}\). Since the BSI game is a sufficient condition for a potential game, the all user movement game \(\hat{\mathscr{G}}_{\mathrm {all \, mov}}\) is a potential game \(\mathscr{G}_{\mathrm{pot}}\). As a result, \(\mathscr{G}_{\mathrm{all \, mov}}\) is a potential game \(\mathscr{G}_{\mathrm{pot}}\).
\end{proof}

\subsection{PROPOSED ALGORITHM}
\begin{table}[!t]
  \caption{Parameter values.}
  \label{Para}
  \centering
  \begin{tabular}{cc}
    \hline
    Parameter  & value  \\
    \hline
    Vertical length of area $d_{\mathrm{ver}}$  & $60 \, \mathrm{m}$ \\
    Horizontal length of area $d_{\mathrm{hor}}$  & $60 \, \mathrm{m}$ \\
    Communication standard & IEEE 802.11g \\
    Number of APs & 1 \\
    Coordinates of AP $\bm{d}_{\mathrm{AP}}$ & $(30,30)\, \mathrm{m}$ \\
    Number of moving users $N$ & 2 \\
    Number of non-moving users $M$ & 2 \\
    Path loss exponent $\alpha$ & 2 \\
    Transmit power of terminal $P^{\mathrm{send}}$ & $32 \, \mathrm{dBm}$ \\
    Antenna gain $g$ & 5.00 \\
    SINR thresholds $\delta$ & $-20 \, \mathrm{dB}$ \\
    White Gaussian noise $N_{0}$ & $10^{-13} \, \mathrm{W}$ \\
    Channel bandwidth $W$ & $20 \, \mathrm{MHz}$ \\
    Probability of non-collision $P^{\mathrm{non}}_{\mathrm{collision}}$ & $3 \, \%$ \\
    Probability of collision $P_{\mathrm{collision}}$ & $97 \, \%$ \\
    Step count $k$ & 1000 \\
    \hline
  \end{tabular}
\end{table}

\begin{figure*}[!t]
      \begin{minipage}[t]{0.33\linewidth}
        \centering
        \includegraphics[keepaspectratio, scale=0.2]{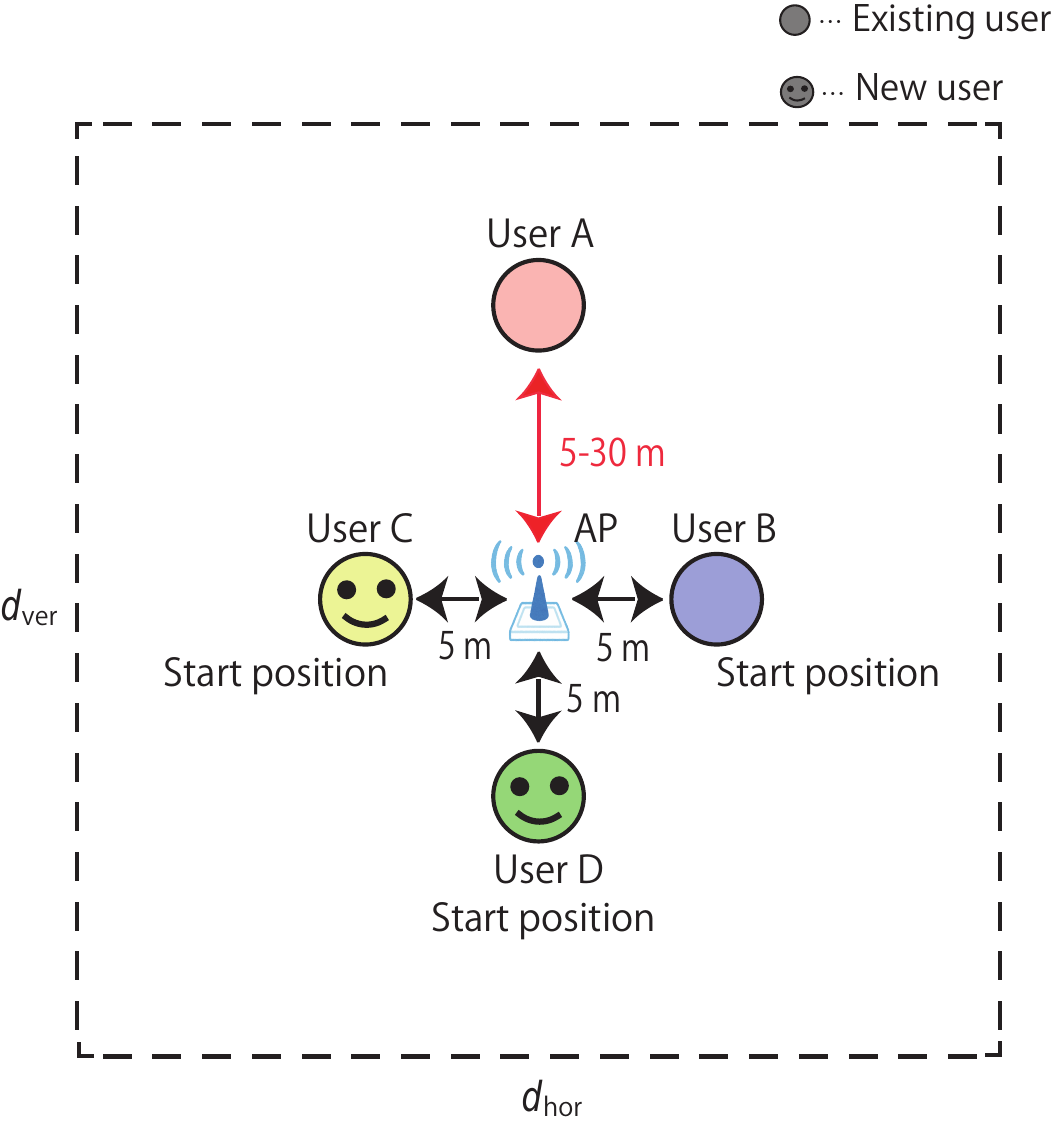}
        \subcaption{Assumed environment of pattern \((\mathrm{I})\).}
      \end{minipage}
      \begin{minipage}[t]{0.33\linewidth}
        \centering
        \includegraphics[keepaspectratio, scale=0.2]{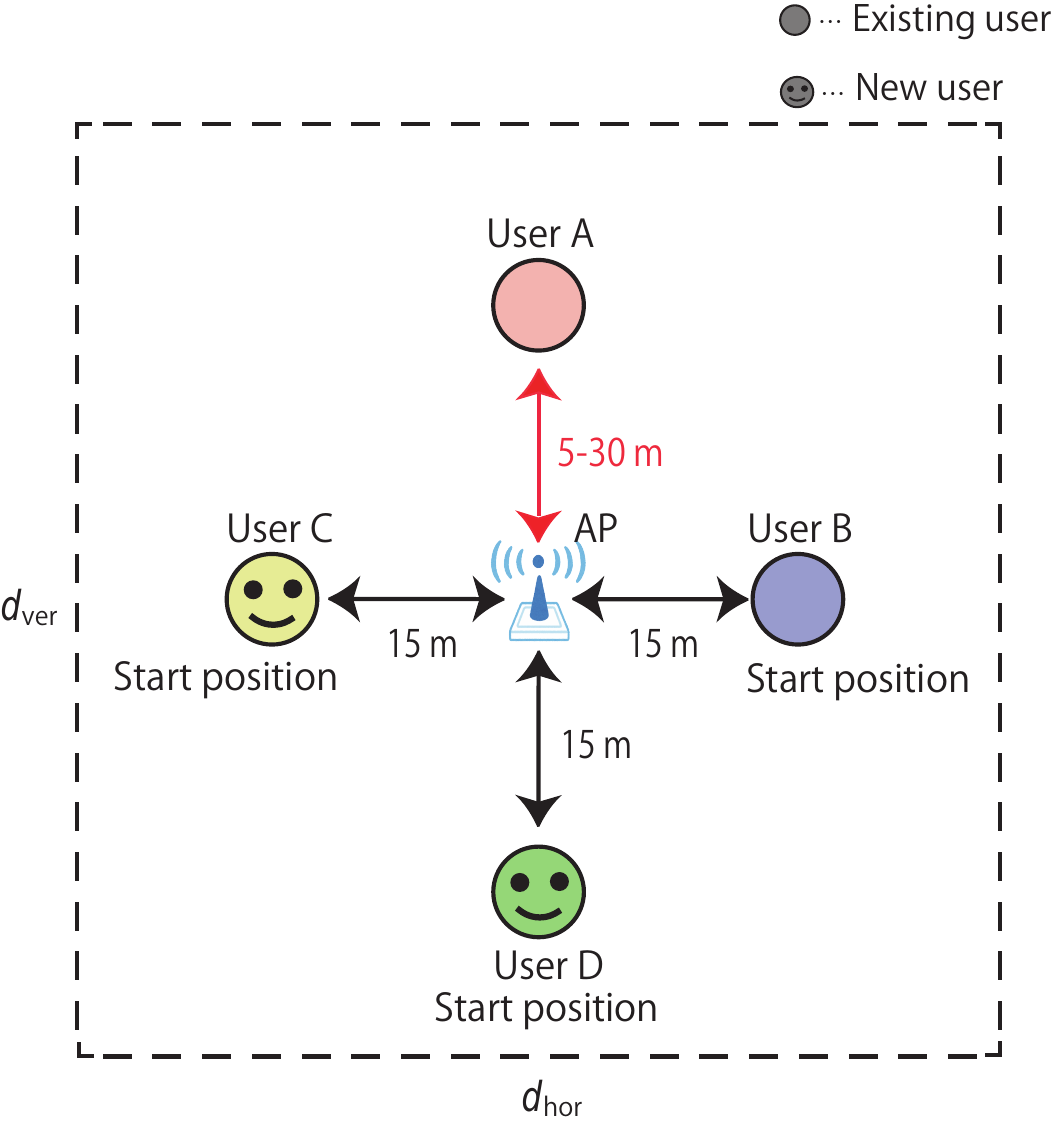}
        \subcaption{Assumed environment of pattern \((\mathrm{I}\hspace{-1.2pt}\mathrm{I})\).}
      \end{minipage}  
      \begin{minipage}[t]{0.33\linewidth}
        \centering
        \includegraphics[keepaspectratio, scale=0.2]{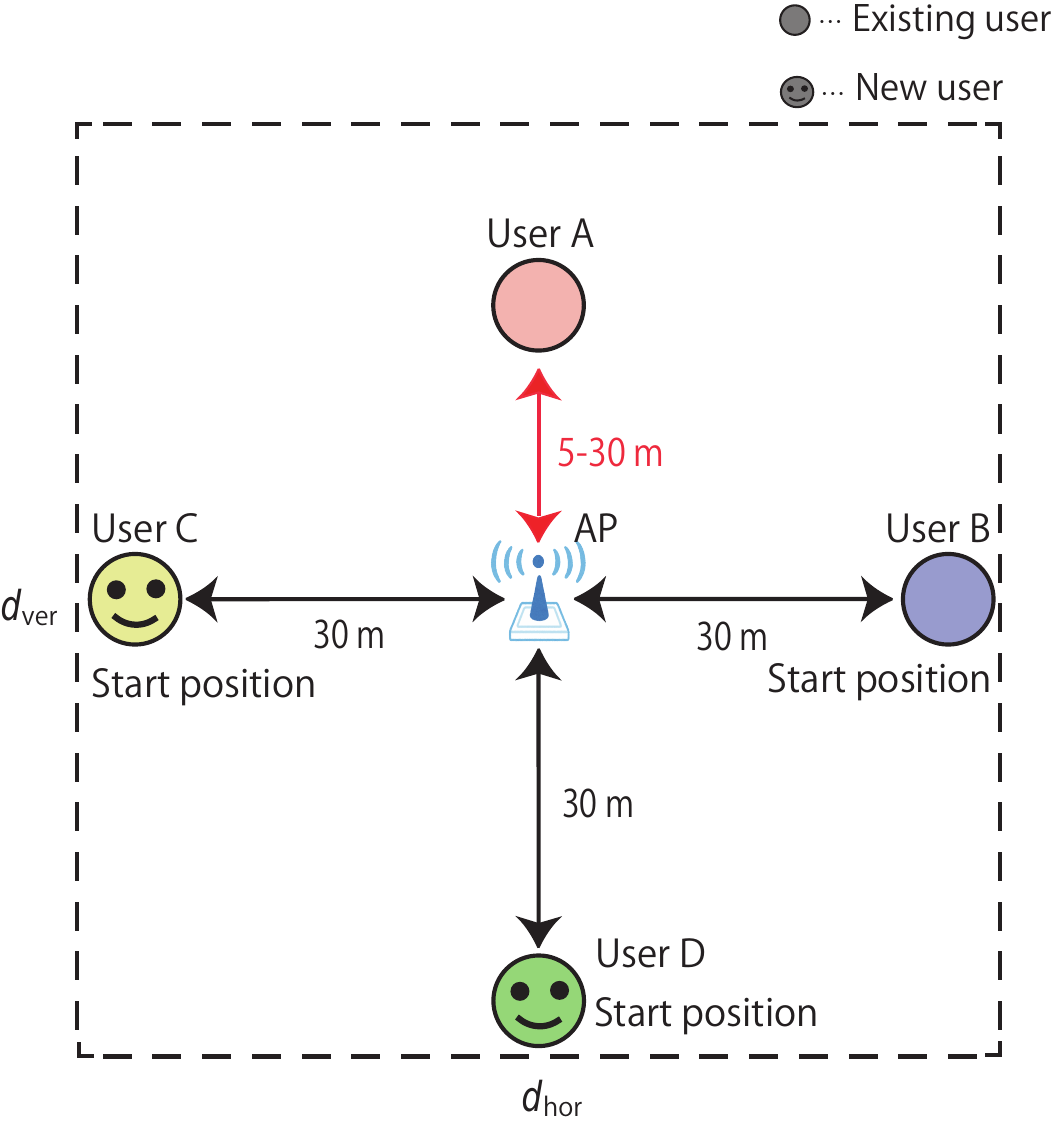}
        \subcaption{Assumed environment of pattern \((\mathrm{I}\hspace{-1.2pt}\mathrm{I}\hspace{-1.2pt}\mathrm{I})\).}
      \end{minipage}
      \begin{minipage}[t]{0.33\linewidth}
        \centering
        \includegraphics[keepaspectratio, scale=0.2]{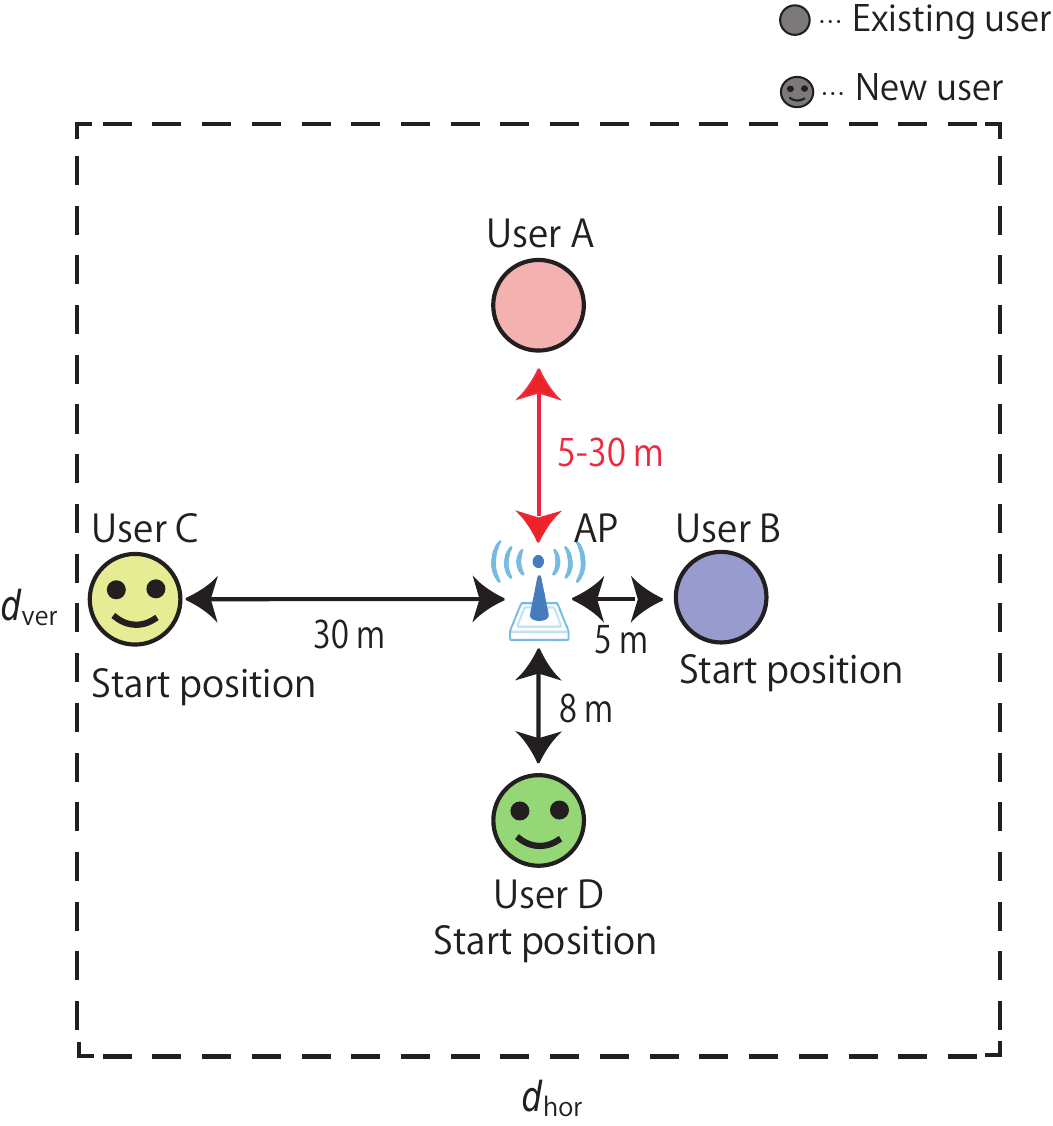}
        \subcaption{Assumed environment of pattern \((\mathrm{I}\hspace{-1.2pt}\mathrm{V})\).}
      \end{minipage}
      \begin{minipage}[t]{0.33\linewidth}
        \centering
        \includegraphics[keepaspectratio, scale=0.2]{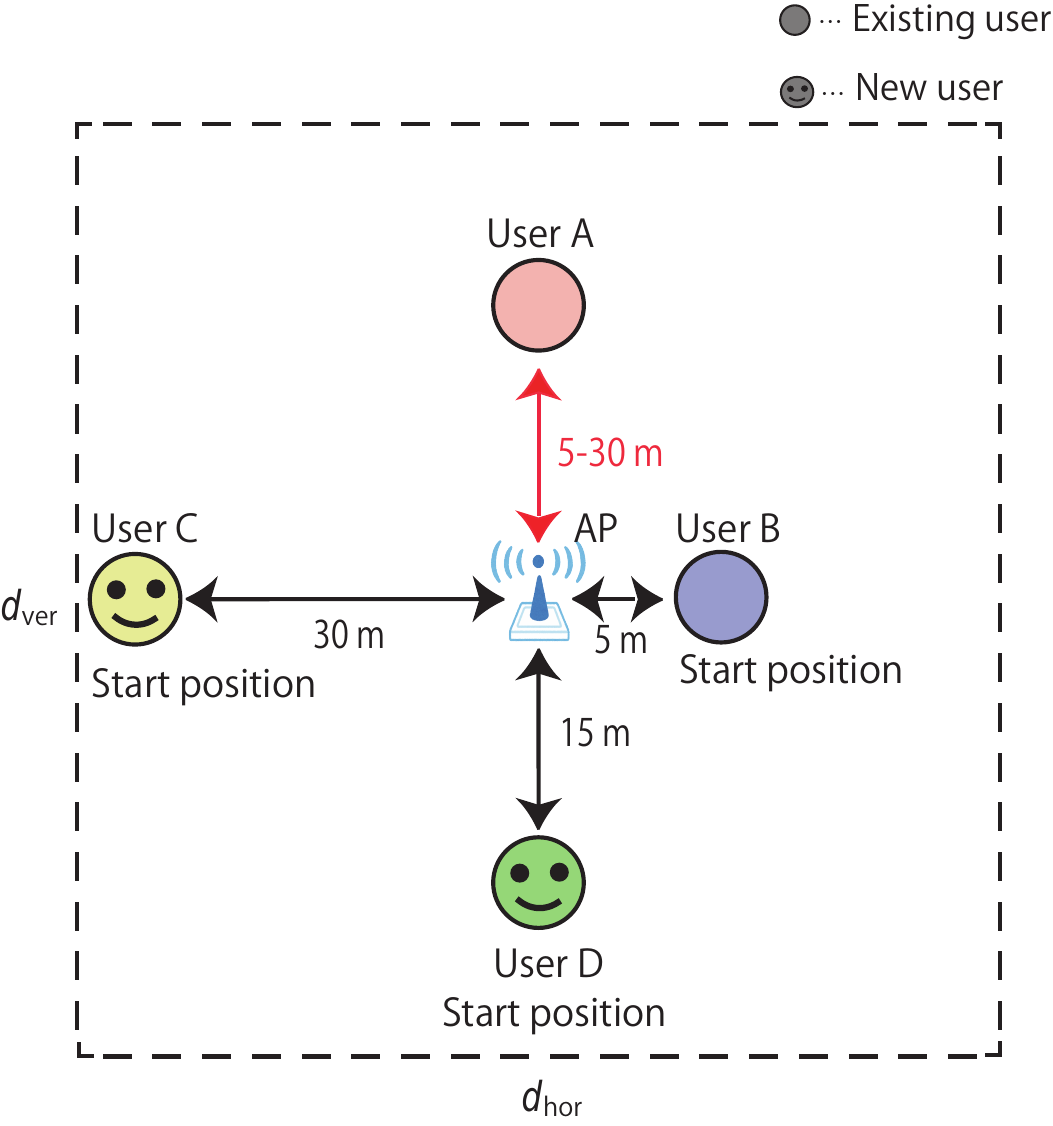}
        \subcaption{Assumed environment of pattern \((\mathrm{V})\).}
      \end{minipage}
      \begin{minipage}[t]{0.33\linewidth}
        \centering
        \includegraphics[keepaspectratio, scale=0.2]{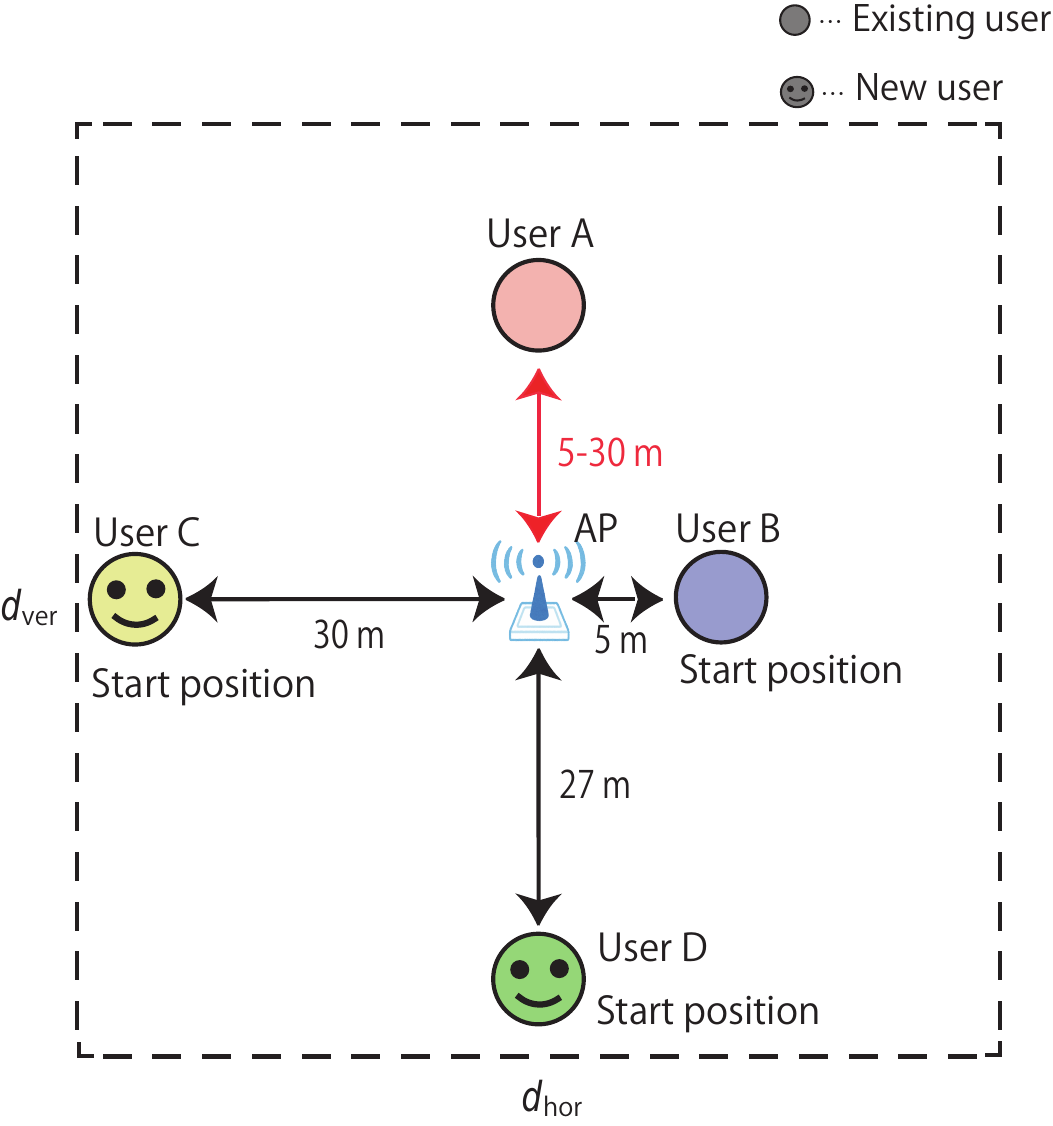}
        \subcaption{Assumed environment of pattern \((\mathrm{V}\hspace{-1.2pt}\mathrm{I})\).}
      \end{minipage}
     \caption{Assumed environment.}
     \label{Assumed_enviroment}
\end{figure*}

There are multiple algorithms for finding Nash equilibrium in potential games \cite{yamamoto_potential}. However, spatial adaptive play (SAP) maximizes the potential function with a high probability of the Nash equilibrium converging. In this paper, we use an SAP algorithm \cite{9136693}. This method decides a player's next strategy probabilistically.

Player \(i \in \mathcal N\) can only change his/her strategy \(a_{i}[k]\) at the \(k\)th iteration time. Player \(i\) chooses at the \(k+1\)th iteration time strategy \(a_{i}[k+1]\) with probability \(p_{i}\) as follow \cite{yamamoto_potential}.
\begin{equation}
p_{i}(a_{i}[k+1], \bm{a_{-i}}[k]) = \frac{\exp\{ \beta \hat{u_{i}}(a_{i}[k+1], \bm{a_{-i}}[k]) \}}{\sum_{a_{i}^{'} \in \mathcal{A}_{i} \backslash \{a_{i}\} } \exp \{ \beta \hat{u_{i}}(a_{i}^{'}, \bm{a_{-i}} [k]) \}},
\label{equ_pro}
\end{equation}
where \(\beta\) is a parameter that takes the value \(0<\beta<\infty\); in this calculation, we assume that \(\beta = k\)  \cite{9136693}. Using this \(p_{i}\), player \(i\) chooses the strategy that can maximize his/her utility value \(\hat{u_{i}}\).

The maximum utility value can be written using \(p_{i}\) as in \cite{9136693}:
\footnotesize
\begin{equation}
\begin{split}
\max \hat{u_{i}}(a_{i}[k+1], \bm{a_{-i}}[k]) \approx \max [&p_{i}(a_{i}[k+1])\hat{u_{i}}(a_{i}[k+1], \bm{a_{-i}}[k])\\ 
&- \frac{1}{\beta}p_{i}(a_{i}[k+1])\log p_{i}(a_{i}[k+1])].
\label{equ_max}
\end{split}
\end{equation}
\normalsize
Eq. (\ref{equ_max}) is used to find the strategy that maximizes the utility value. Details on the SAP algorithm and the all user movement game algorithm are given in Algorithm \ref{alg1} and Algorithm \ref{alg2}. The Nash equilibrium of our proposed game \(\mathscr{G}_{\mathrm{all \, mov}}\) is derived by these algorithms.

\section{NUMERICAL ANALYSIS}

\begin{figure*}[!t]
  \begin{minipage}[b]{0.33\linewidth}
    \centering
    \includegraphics[keepaspectratio, scale=0.35]{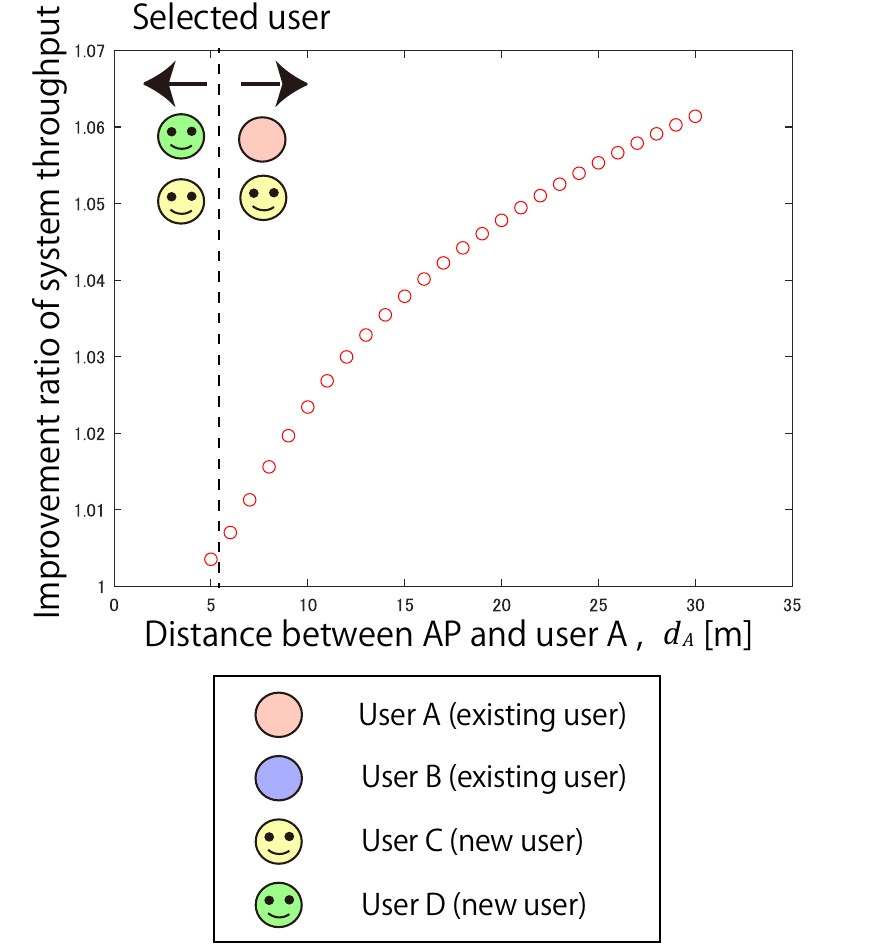}
    \subcaption{Improvement ratio of system throughput between the proposed method and the case in which users do not move.}
  \end{minipage}
  \begin{minipage}[b]{0.33\linewidth}
    \centering
    \includegraphics[keepaspectratio, scale=0.35]{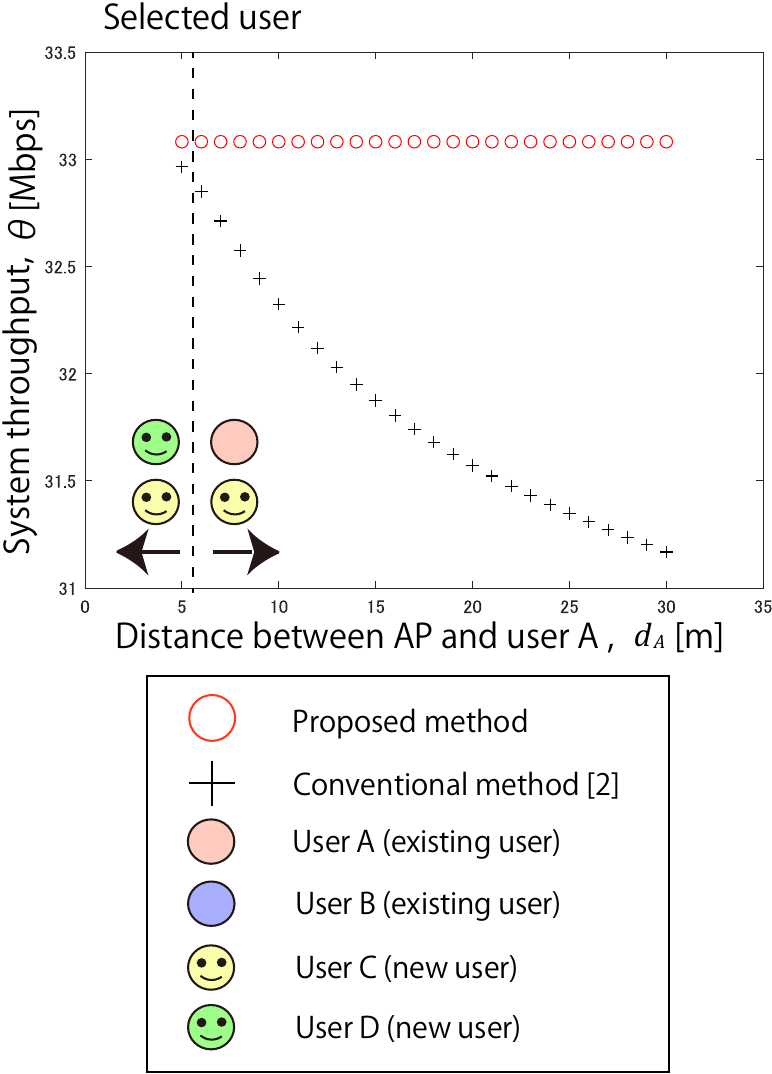}
    \subcaption{System throughput of proposed and conventional methods \cite{miyata_CCNC}.}
  \end{minipage}
  \begin{minipage}[b]{0.33\linewidth}
    \centering
    \includegraphics[keepaspectratio, scale=0.35]{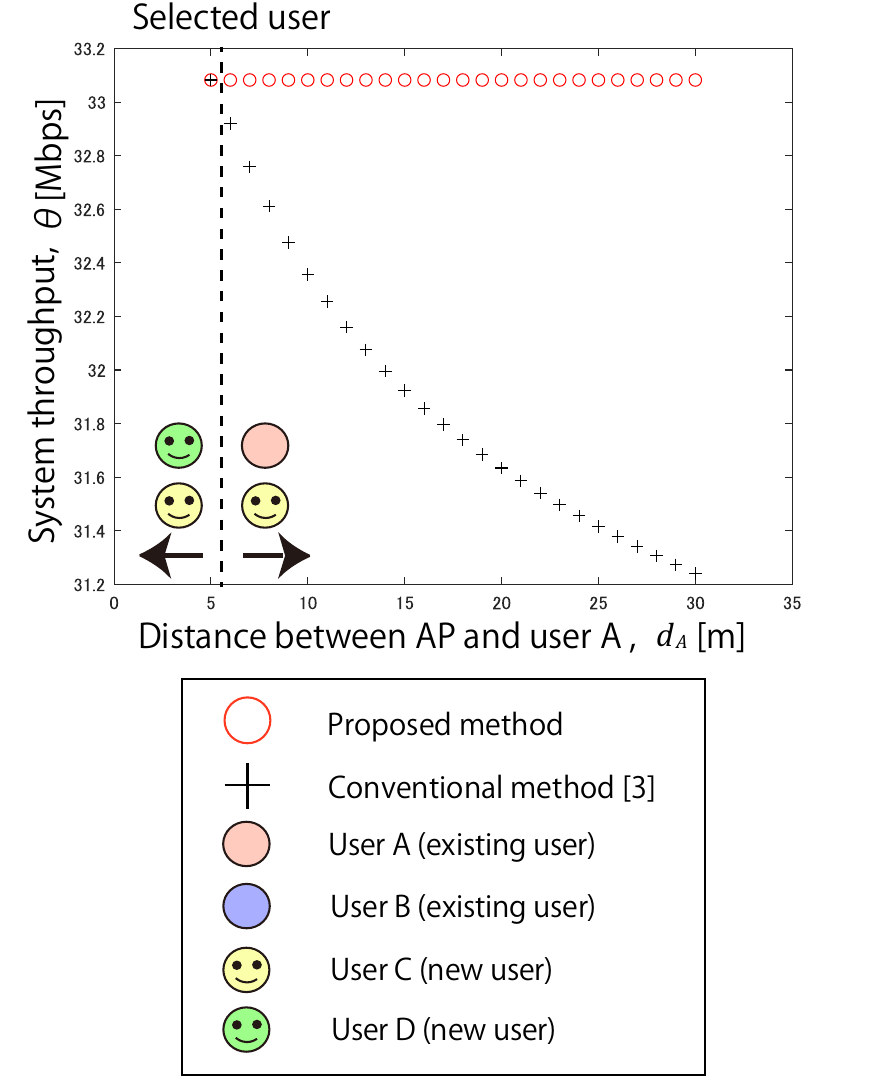}
    \subcaption{System throughput of proposed and conventional methods \cite{kato_CCNC}.}
  \end{minipage}
  \caption{System throughput of each method in pattern \((\mathrm{I})\).}
  \label{System_throughput_comparison_ver1}
\end{figure*}

\begin{figure*}[!t]
  \begin{minipage}[b]{0.33\linewidth}
    \centering
    \includegraphics[keepaspectratio, scale=0.35]{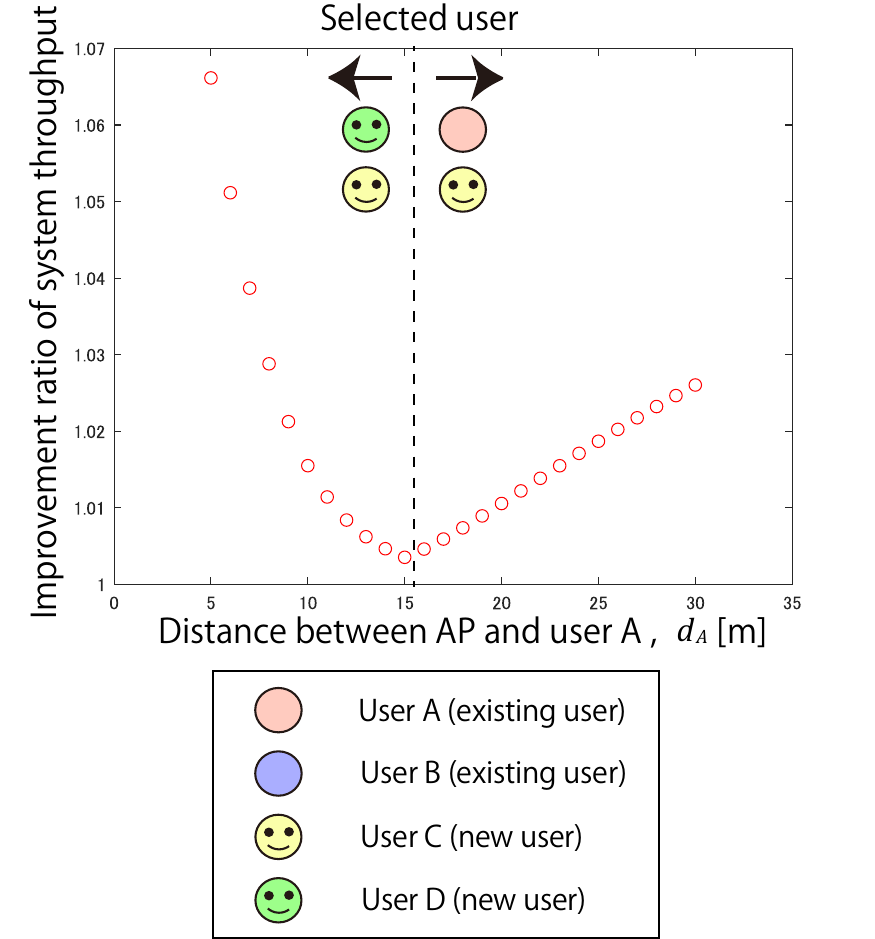}
    \subcaption{Improvement ratio of system throughput between the proposed method and the case in which users do not move.}
  \end{minipage}
  \begin{minipage}[b]{0.33\linewidth}
    \centering
    \includegraphics[keepaspectratio, scale=0.35]{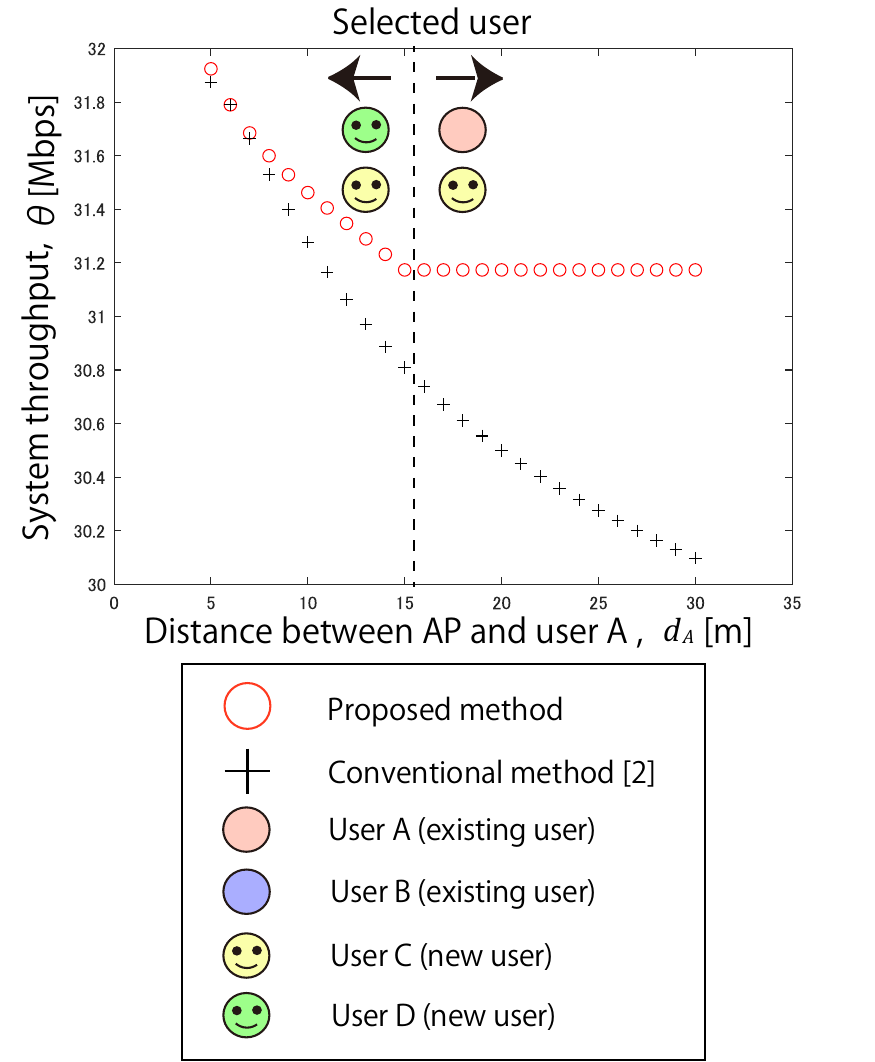}
    \subcaption{System throughput of proposed and conventional methods \cite{miyata_CCNC}.}
  \end{minipage}
  \begin{minipage}[b]{0.33\linewidth}
    \centering
    \includegraphics[keepaspectratio, scale=0.35]{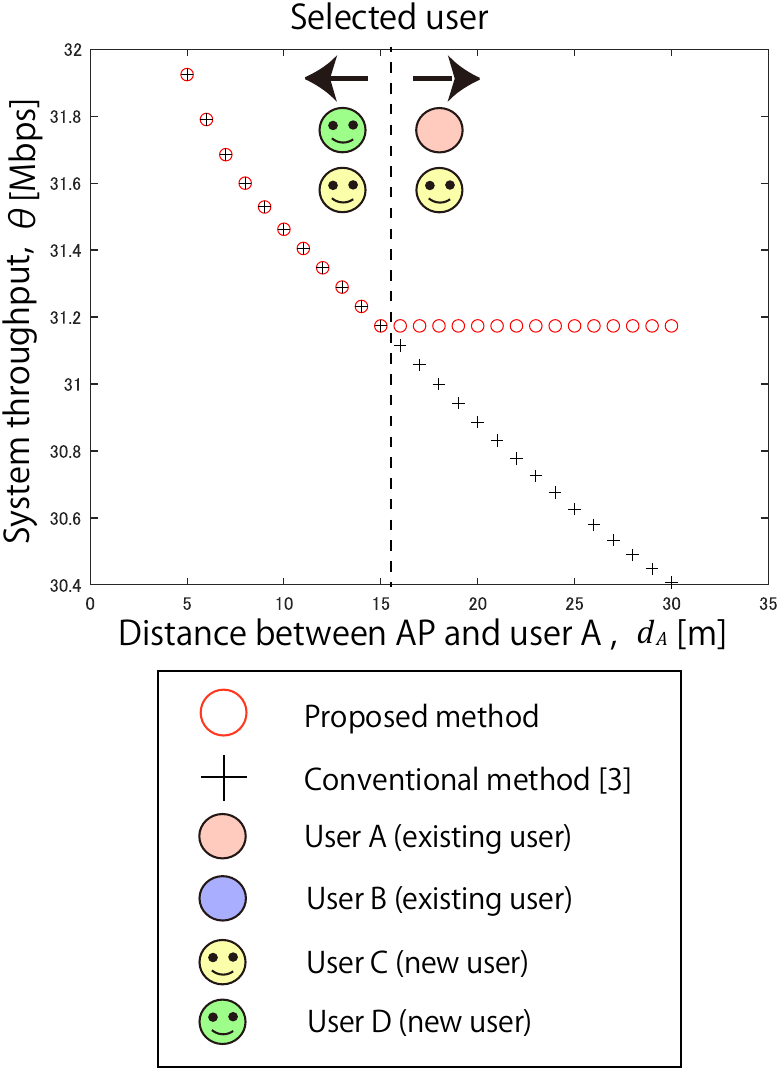}
    \subcaption{System throughput of proposed and conventional methods \cite{kato_CCNC}.}
  \end{minipage}
  \caption{System throughput of each method in pattern \((\mathrm{I}\hspace{-1.2pt}\mathrm{I})\).}
  \label{System_throughput_comparison_ver2}
\end{figure*}

\begin{figure*}[!t]
  \begin{minipage}[b]{0.33\linewidth}
    \centering
    \includegraphics[keepaspectratio, scale=0.35]{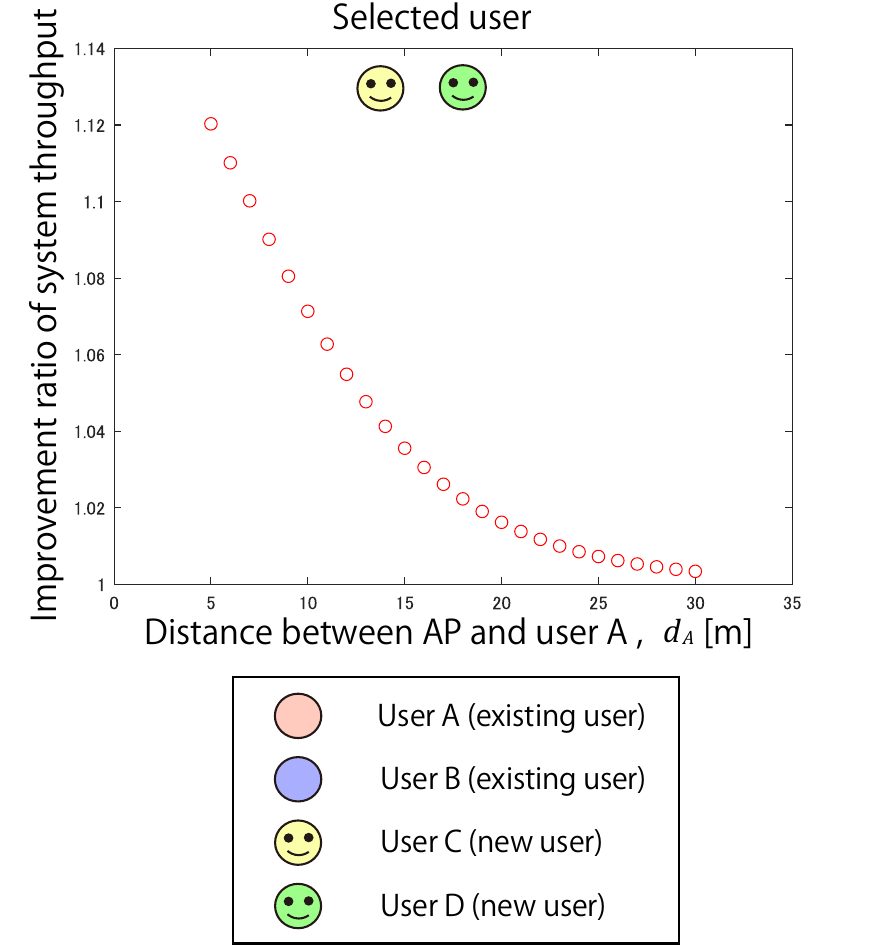}
    \subcaption{Improvement ratio of system throughput between the proposed method and the case in which users do not move.}
  \end{minipage}
  \begin{minipage}[b]{0.33\linewidth}
    \centering
    \includegraphics[keepaspectratio, scale=0.35]{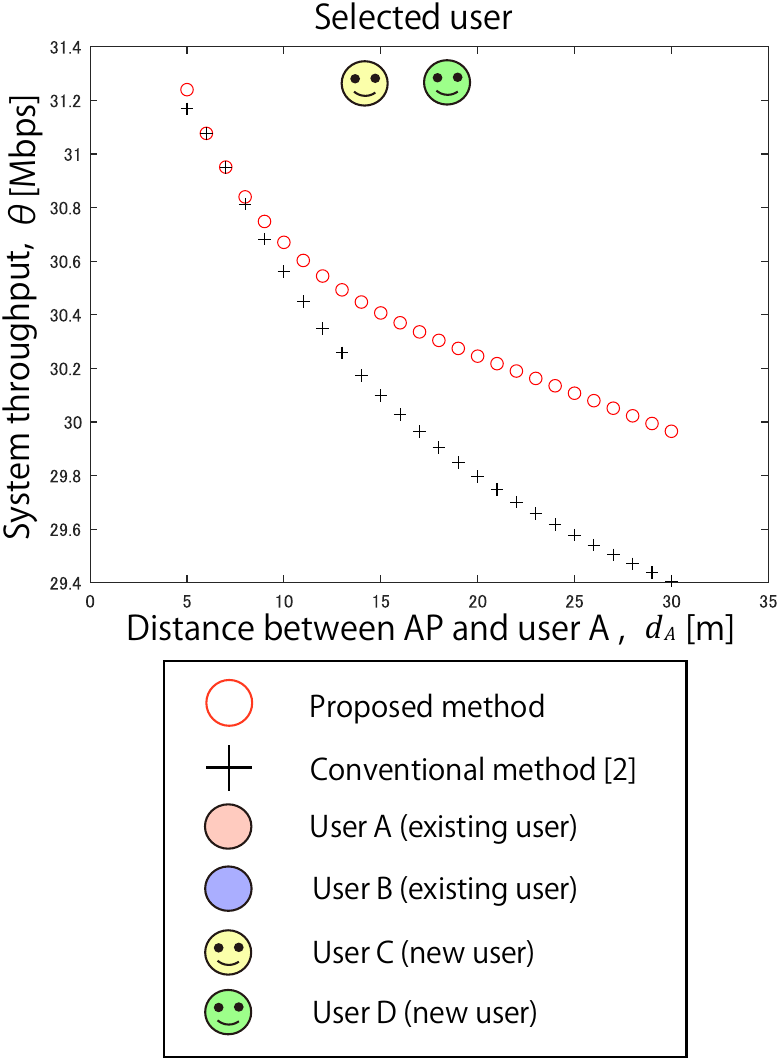}
    \subcaption{System throughput of proposed and conventional methods \cite{miyata_CCNC}.}
  \end{minipage}
  \begin{minipage}[b]{0.33\linewidth}
    \centering
    \includegraphics[keepaspectratio, scale=0.35]{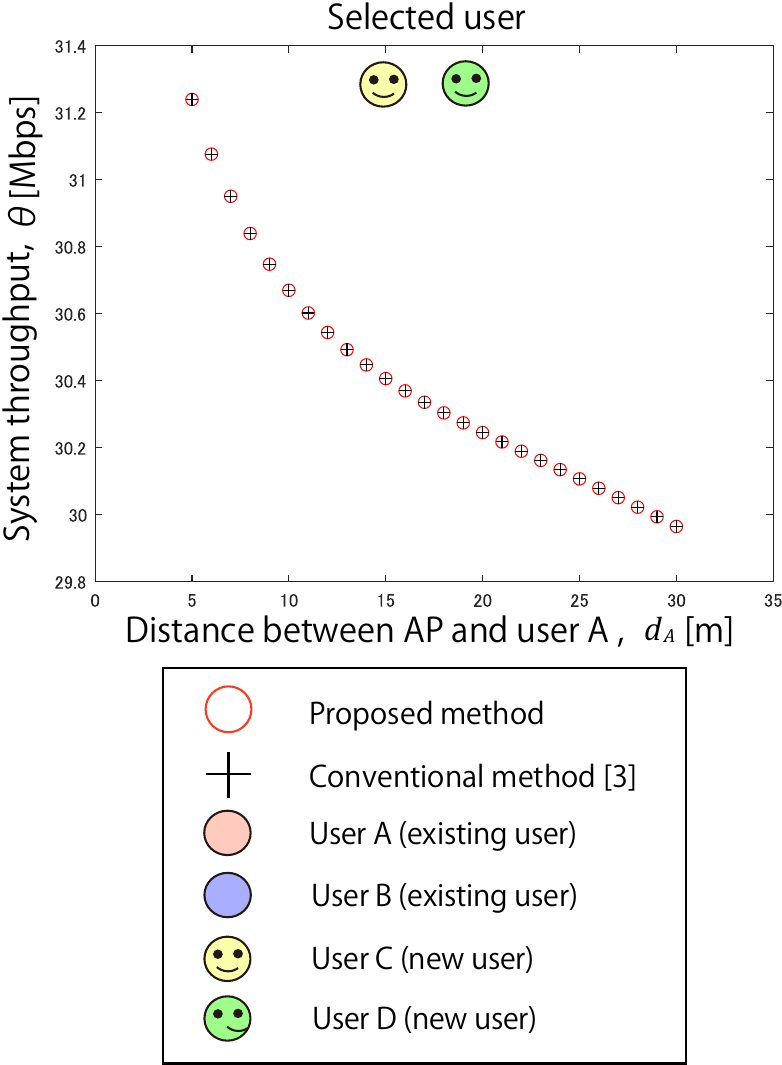}
    \subcaption{System throughput of proposed and conventional methods \cite{kato_CCNC}.}
  \end{minipage}
  \caption{System throughput of each method in pattern \((\mathrm{I}\hspace{-1.2pt}\mathrm{I}\hspace{-1.2pt}\mathrm{I})\).}
  \label{System_throughput_comparison_ver3}
\end{figure*}

\begin{figure}[!t]
  \begin{minipage}[b]{0.48\linewidth}
    \centering
    \includegraphics[keepaspectratio, scale=0.3]{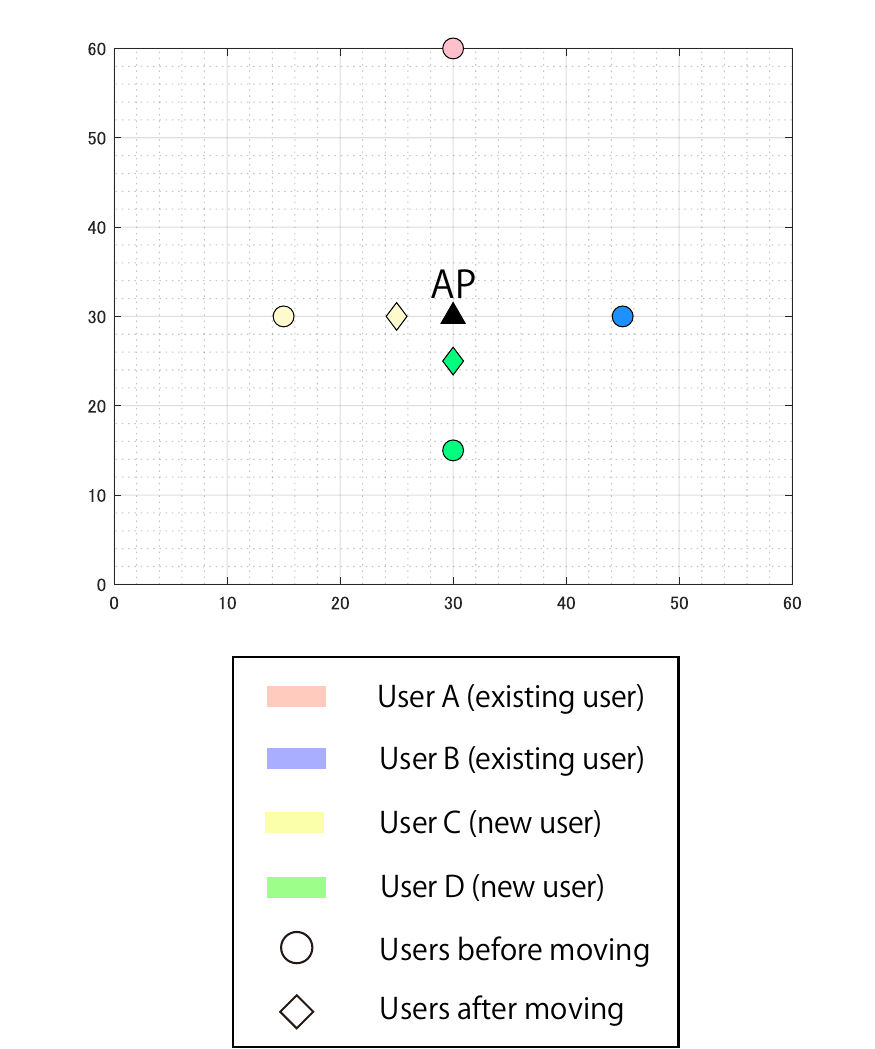}
    \subcaption{Optimal position of each user in conventional method \cite{miyata_CCNC}.}
  \end{minipage}
  \begin{minipage}[b]{0.48\linewidth}
    \centering
    \includegraphics[keepaspectratio, scale=0.3]{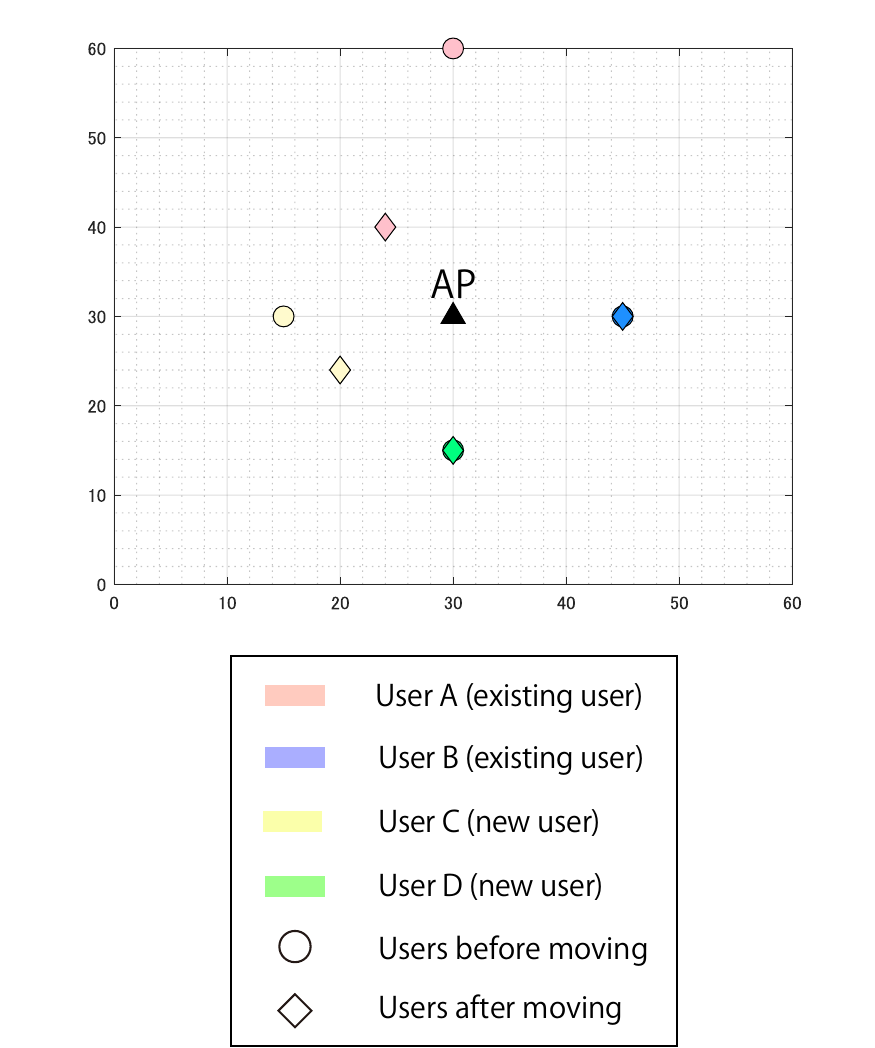}
    \subcaption{Optimal position of each user in proposed method.}
  \end{minipage}
  \caption{Optimal position of each user at \(\hat{\bm{d}}_{A} = (30, 90^\circ)\) in pattern \((\mathrm{I}\hspace{-1.2pt}\mathrm{I})\).}
  \label{User_saiteki_position_pattern2_60}
\end{figure}

\begin{figure}[!t]
  \begin{minipage}[b]{0.48\linewidth}
    \centering
    \includegraphics[keepaspectratio, scale=0.3]{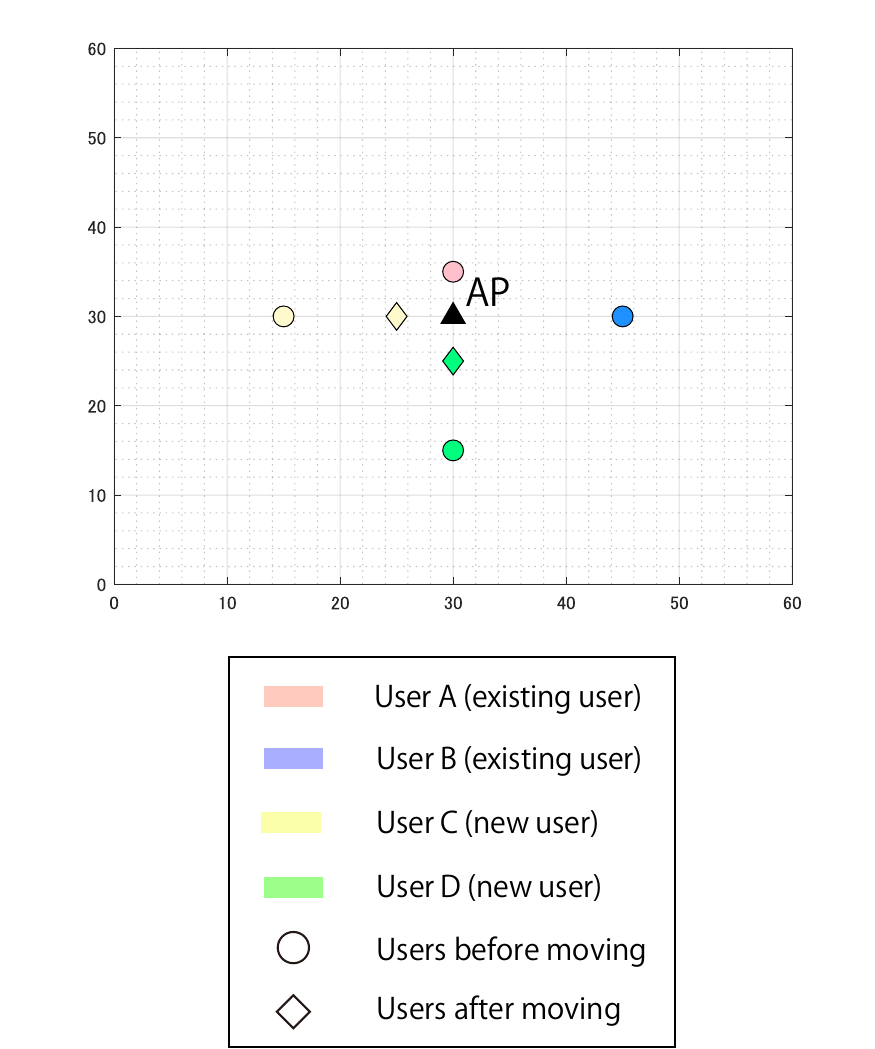}
    \subcaption{Optimal position of each user in conventional method \cite{miyata_CCNC}.}
  \end{minipage}
  \begin{minipage}[b]{0.48\linewidth}
    \centering
    \includegraphics[keepaspectratio, scale=0.3]{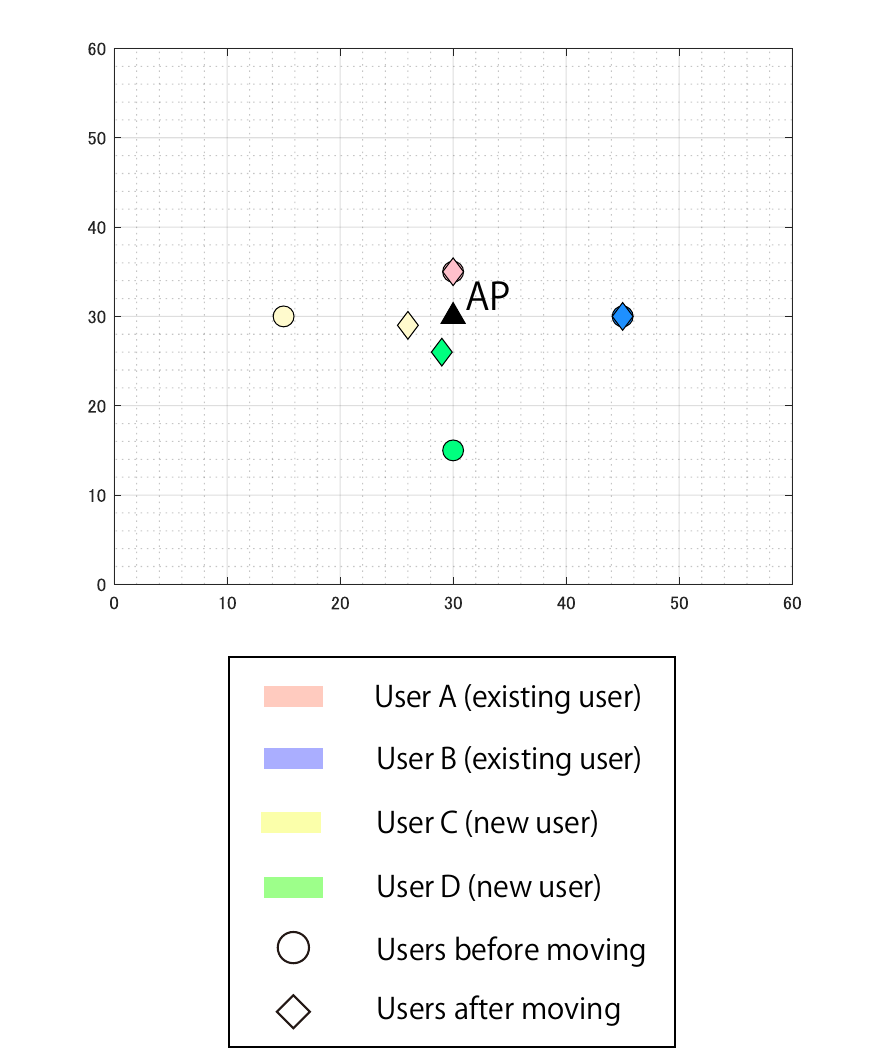}
    \subcaption{Optimal position of each user in proposed method.}
  \end{minipage}
  \caption{Optimal position of each user at \(\bm{d}_{A} = (5, 90^\circ)\) in pattern \((\mathrm{I}\hspace{-1.2pt}\mathrm{I})\).}
  \label{User_saiteki_position_pattern2_35}
\end{figure}

The parameter settings for our numerical analysis are shown in Table \ref{Para}. 

As mentioned in Section 4, we compared the system throughput as an evaluation method with the conventional methods \cite{miyata_CCNC}\cite{kato_CCNC}. Moreover, we use an improvement ratio of the system throughput \(\theta\) as an evaluation basis. Thus, the improvement ratio of the system throughput \(\Delta\theta\) is follows, 
\begin{equation}
\Delta\theta = \frac{\theta_{\mathrm{pro}}} {\theta_{\mathrm{non \, move}}}.
\label{equ_imp}
\end{equation}
Let \(\theta_{\mathrm{pro}}\) be the system throughput of the proposed method, \(\theta_{\mathrm{non move}}\) be the system throughput when users do not move.

In our numerical analysis, we analyze the characteristics of the users' optimal positions, and the system throughput by varying the position of users, as shown in Fig. \ref{Assumed_enviroment}(a)--(f). Here, each user is defined as user A, user B, user C, and user D. In the proposed method, there is no difference between these users. However, there is a clear difference between new users and existing users in the conventional methods \cite{miyata_CCNC}\cite{kato_CCNC}. Therefore, we define users A and B as existing users and users C and D as new users in order to compare the conventional methods \cite{miyata_CCNC}\cite{kato_CCNC}. As a specific user setup, the initial positions of all users except user A were fixed, and the system throughput was calculated when the initial position of user A \(\hat{\bm{d}_{A}}\) was changed. We assume 6 patterns of users' positions. In particular, patterns \((\mathrm{I})\)--\((\mathrm{I}\hspace{-1.2pt}\mathrm{I}\hspace{-1.2pt}\mathrm{I})\) is the case in which the distance between the AP and users is equal at the initial position of users. Furthermore, patterns \((\mathrm{I}\hspace{-1.2pt}\mathrm{V})\)--\((\mathrm{V}\hspace{-1.2pt}\mathrm{I})\) are cases in which all users cannot be equally distance from the AP.

In pattern \((\mathrm{I})\), we set the initial position of user B to \(\hat{\bm{d}}_{B} = (5, 0^\circ)\), user C to \(\hat{\bm{d}}_{C} = (5, 180^\circ)\), and user D to \(\hat{\bm{d}}_{D} = (5, -90^\circ)\), as shown in Fig. \ref{Assumed_enviroment}(a). In pattern \((\mathrm{I}\hspace{-1.2pt}\mathrm{I})\), we set the initial position of user B to \(\hat{\bm{d}}_{B} = (15, 0^\circ)\), user C to \(\hat{\bm{d}}_{C} = (15, 180^\circ)\), and user D to \(\hat{\bm{d}}_{D} = (15, -90^\circ)\), as shown in Fig. \ref{Assumed_enviroment}(b). In pattern \((\mathrm{I}\hspace{-1.2pt}\mathrm{I}\hspace{-1.2pt}\mathrm{I})\), we set the initial position of user B to \(\hat{\bm{d}}_{B} = (30, 0^\circ)\), user C to \(\hat{\bm{d}}_{C} = (30, 180^\circ)\), and user D to \(\hat{\bm{d}}_{D} = (30, -90^\circ)\), as shown in Fig. \ref{Assumed_enviroment}(c).

Fig. \ref{System_throughput_comparison_ver1} shows the system throughput of the proposed method and the other methods in the pattern \((\mathrm{I})\). In particular, Fig. \ref{System_throughput_comparison_ver1}(a) shows the improvement ratio of system throughput \(\Delta\theta\) between the proposed method and the case in which users do not move. Fig. \ref{System_throughput_comparison_ver1}(b) shows the system throughput of proposed and conventional methods \cite{miyata_CCNC}. In addition, Fig. \ref{System_throughput_comparison_ver1}(c) shows the system throughput of proposed and conventional methods \cite{kato_CCNC}. Similarly, Fig. \ref{System_throughput_comparison_ver2} shows the system throughput of the proposed method and the other methods in the pattern \((\mathrm{I}\hspace{-1.2pt}\mathrm{I})\). Furthermore, Fig. \ref{System_throughput_comparison_ver3} shows the system throughput of the proposed method and the other methods in the pattern \((\mathrm{I}\hspace{-1.2pt}\mathrm{I}\hspace{-1.2pt}\mathrm{I})\). The horizontal axis represents the initial position of user A.

From Figs. \ref{System_throughput_comparison_ver1}--\ref{System_throughput_comparison_ver3}, our proposed method improves system throughput compared to all conventional methods \cite{miyata_CCNC}\cite{kato_CCNC}. In particular, our proposed method improved system throughput by about \(6 \, \%\) at most compared to the conventional method \cite{miyata_CCNC}. Moreover, the improvement ratio is especially high at the position \(d_{A} = 30\). In order to explain the cause, we show Fig. \ref{User_saiteki_position_pattern2_60}. Fig. \ref{User_saiteki_position_pattern2_60} is the user's optimal position of each methods at \(\hat{\bm{d}}_{A} = (30, 90^\circ)\) in pattern \((\mathrm{I}\hspace{-1.2pt}\mathrm{I})\). 

The conventional method \cite{miyata_CCNC} is a method in which user C and user D move closer to the AP to obtain a higher transmission rate. However, the conventional method \cite{miyata_CCNC} doesn't consider the interference between users and the transmission rates obtained by other users. Therefore, users C and D move closer to the AP, and only user A connects to the AP at a far away position from the AP, as shown in Fig. \ref{User_saiteki_position_pattern2_60}(a). Thus, because the distance between user A and the AP is far away, the transmit power of user A \(P^{\mathrm{receive}}_{A}\) received by the AP is low. Moreover, compared to user A, the other users are connected relatively close to the AP. Therefore, the interference \(\Gamma_{A}\) received by user A also increases. As a result, only the transmission rate of user A will be extremely low, degrading the overall system performance.

Meanwhile, the optimal position of the users in the proposed method cannot be a situation which degrades the overall system performance. This is because users are moved at an equal distance from the AP, considering the interference between users, as shown in Fig. \ref{User_saiteki_position_pattern2_60}(b). Thus, we can assume that the system throughput of the proposed method is significantly higher than that of the conventional method. For these reasons, the system throughput of the proposed method improves very highly, especially in the pattern \((\mathrm{I})\) in which only user A is significantly far away from the AP.

\begin{figure*}[!t]
  \begin{minipage}[b]{0.33\linewidth}
    \centering
    \includegraphics[keepaspectratio, scale=0.35]{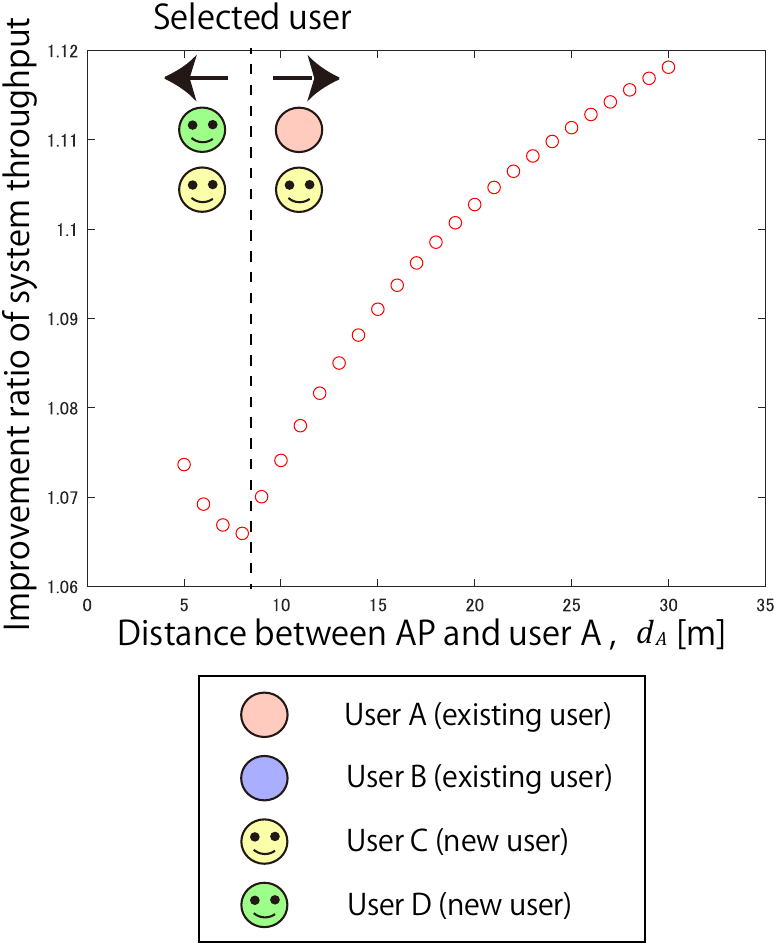}
    \subcaption{Improvement ratio of system throughput between the proposed method and the case in which users do not move.}
  \end{minipage}
  \begin{minipage}[b]{0.33\linewidth}
    \centering
    \includegraphics[keepaspectratio, scale=0.35]{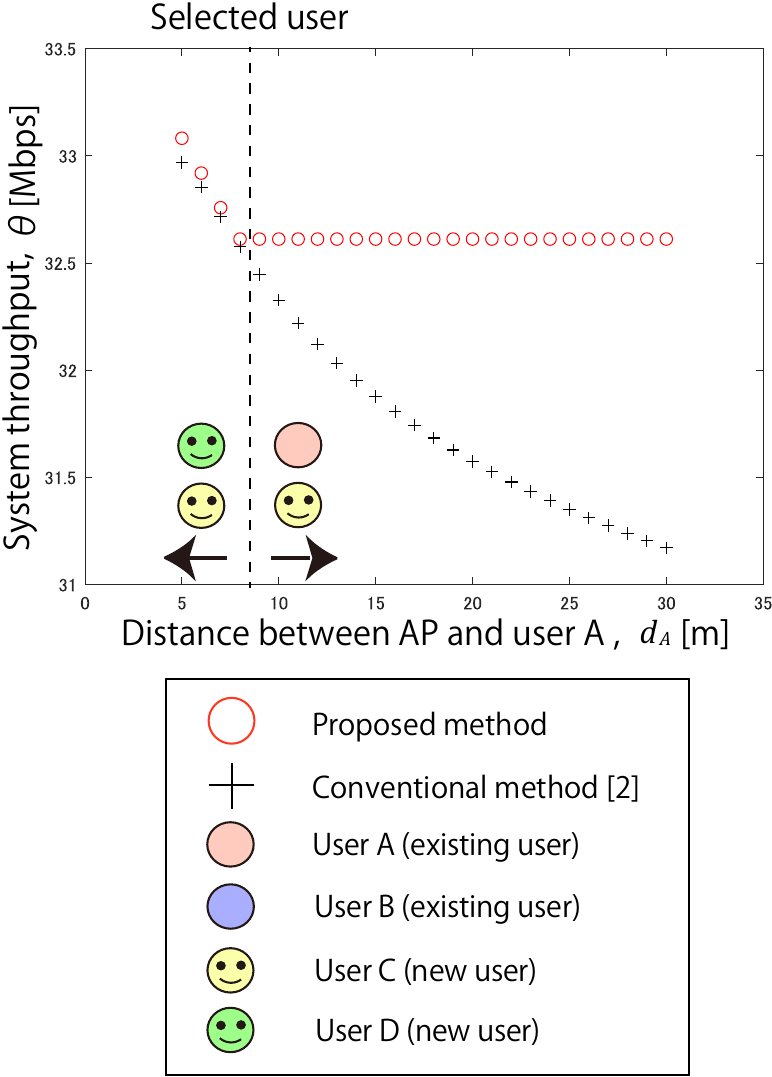}
    \subcaption{System throughput of proposed and conventional methods \cite{miyata_CCNC}.}
  \end{minipage}
  \begin{minipage}[b]{0.33\linewidth}
    \centering
    \includegraphics[keepaspectratio, scale=0.35]{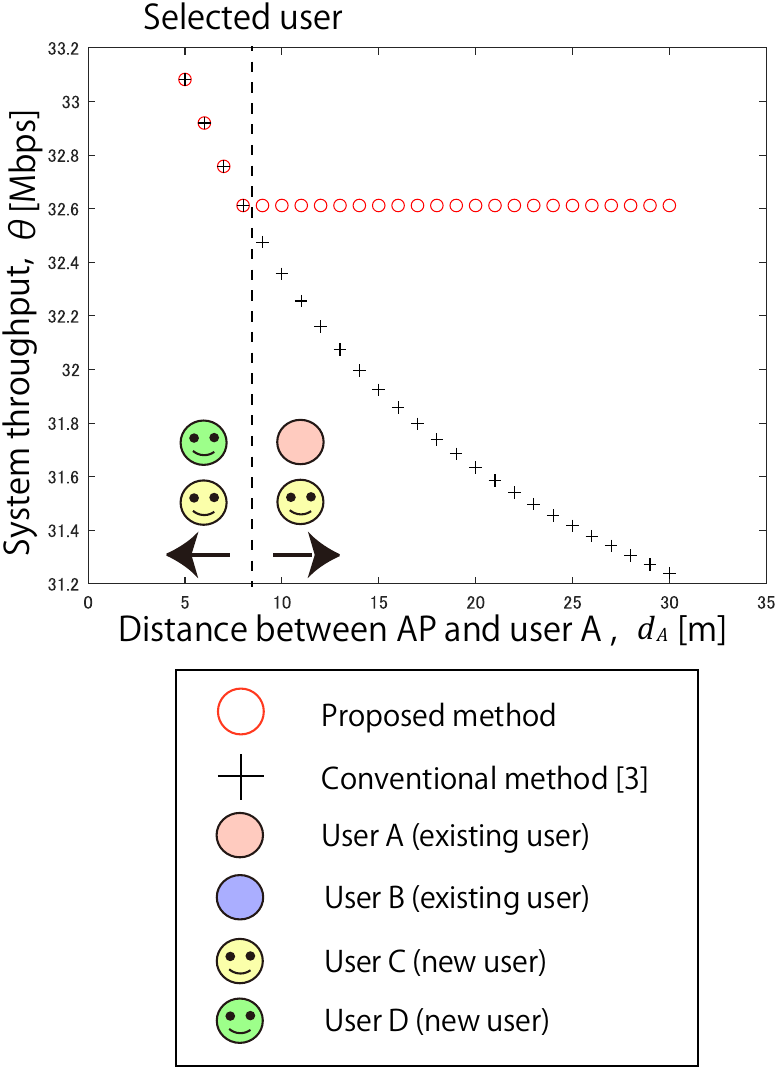}
    \subcaption{System throughput of proposed and conventional methods \cite{kato_CCNC}.}
  \end{minipage}
  \caption{System throughput of each method in pattern \((\mathrm{I}\hspace{-1.2pt}\mathrm{V})\).}
  \label{System_throughput_comparison_ver4}
\end{figure*}

\begin{figure*}[!t]
  \begin{minipage}[b]{0.33\linewidth}
    \centering
    \includegraphics[keepaspectratio, scale=0.35]{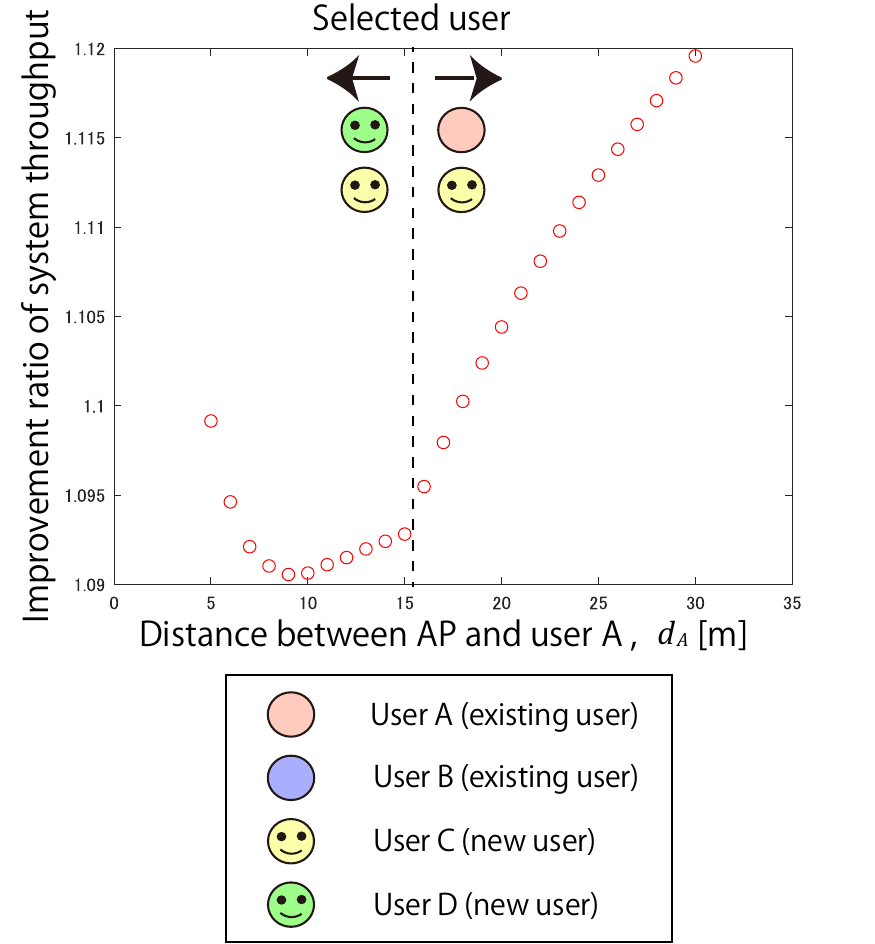}
    \subcaption{Improvement ratio of system throughput between the proposed method and the case in which users do not move.}
  \end{minipage}
  \begin{minipage}[b]{0.33\linewidth}
    \centering
    \includegraphics[keepaspectratio, scale=0.35]{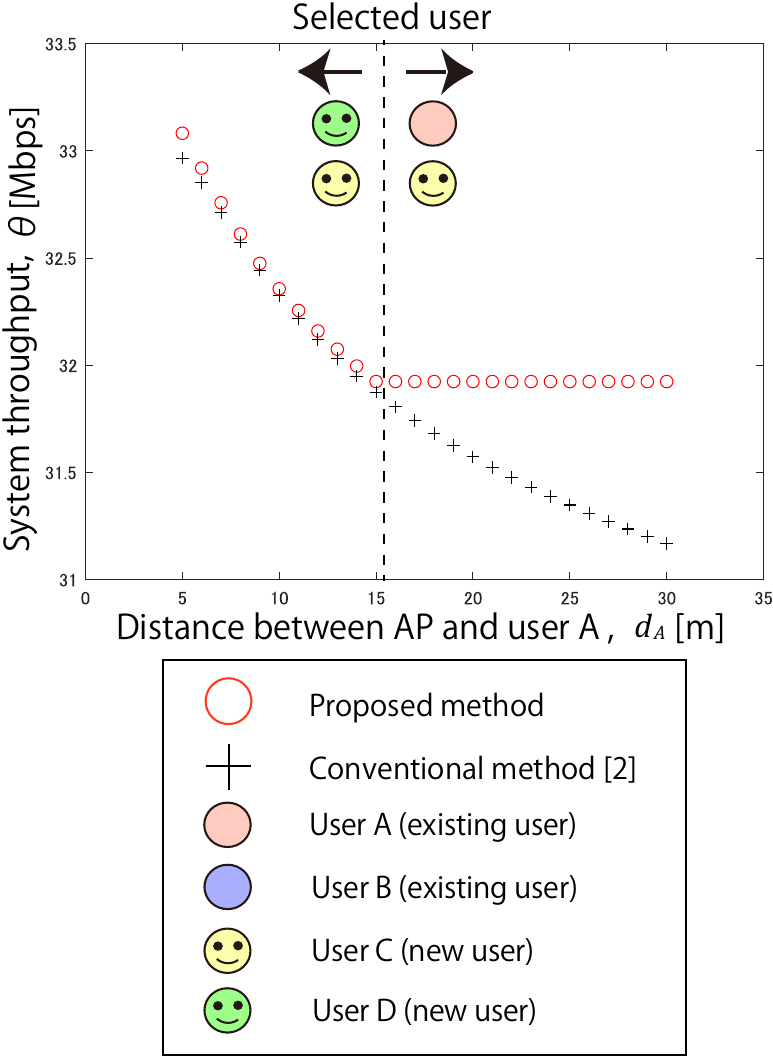}
    \subcaption{System throughput of proposed and conventional methods \cite{miyata_CCNC}.}
  \end{minipage}
  \begin{minipage}[b]{0.33\linewidth}
    \centering
    \includegraphics[keepaspectratio, scale=0.35]{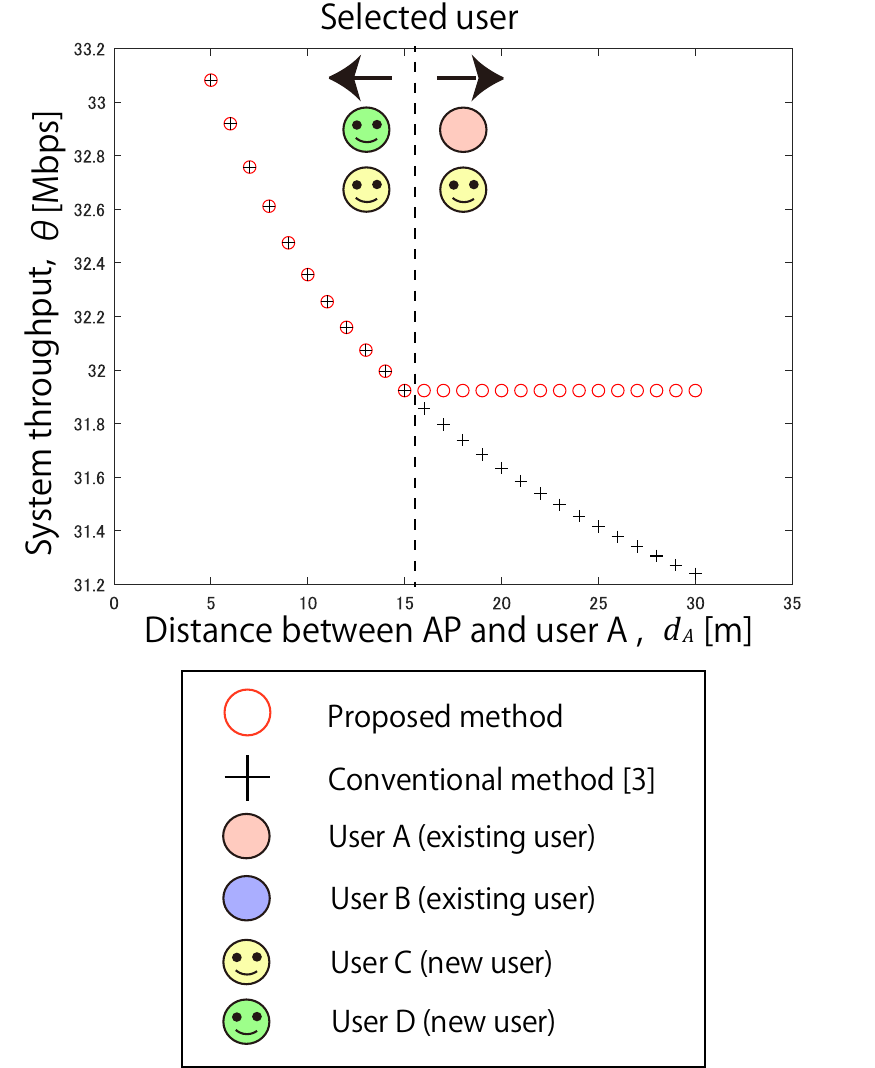}
    \subcaption{System throughput of proposed and conventional methods \cite{kato_CCNC}.}
  \end{minipage}
  \caption{System throughput of each method in pattern \((\mathrm{V})\).}
  \label{System_throughput_comparison_ver5}
\end{figure*}

\begin{figure*}[!t]
  \begin{minipage}[b]{0.33\linewidth}
    \centering
    \includegraphics[keepaspectratio, scale=0.35]{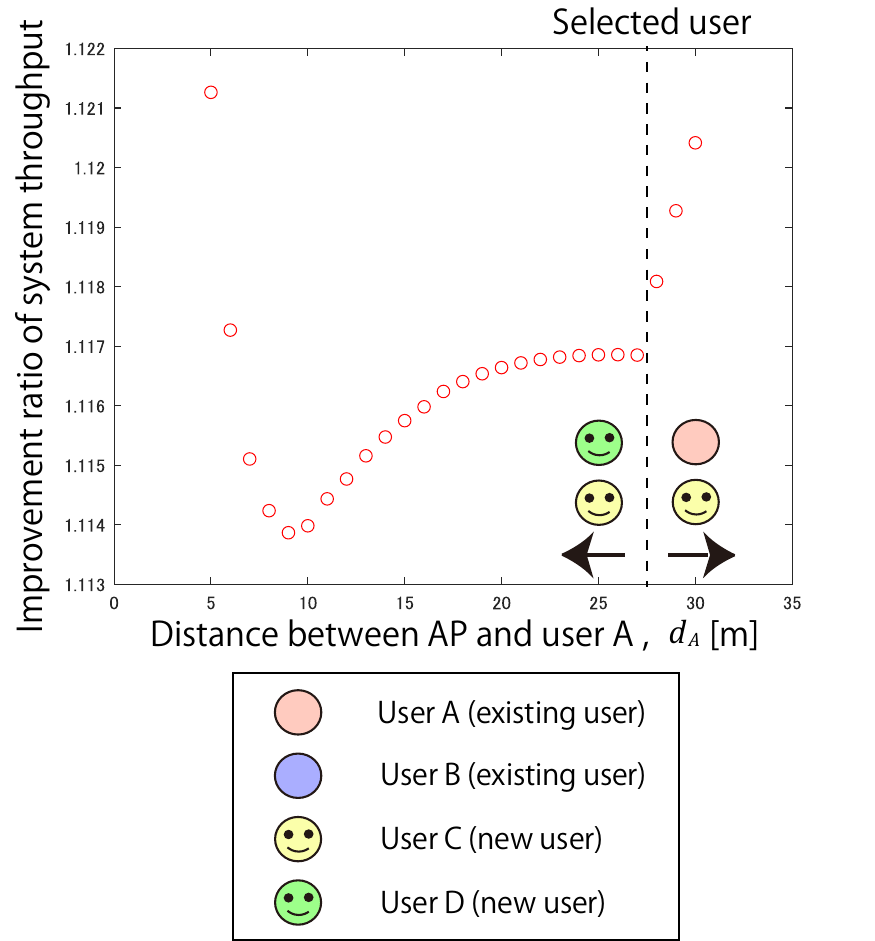}
    \subcaption{Improvement ratio of system throughput between the proposed method and the case in which users do not move.}
  \end{minipage}
  \begin{minipage}[b]{0.33\linewidth}
    \centering
    \includegraphics[keepaspectratio, scale=0.35]{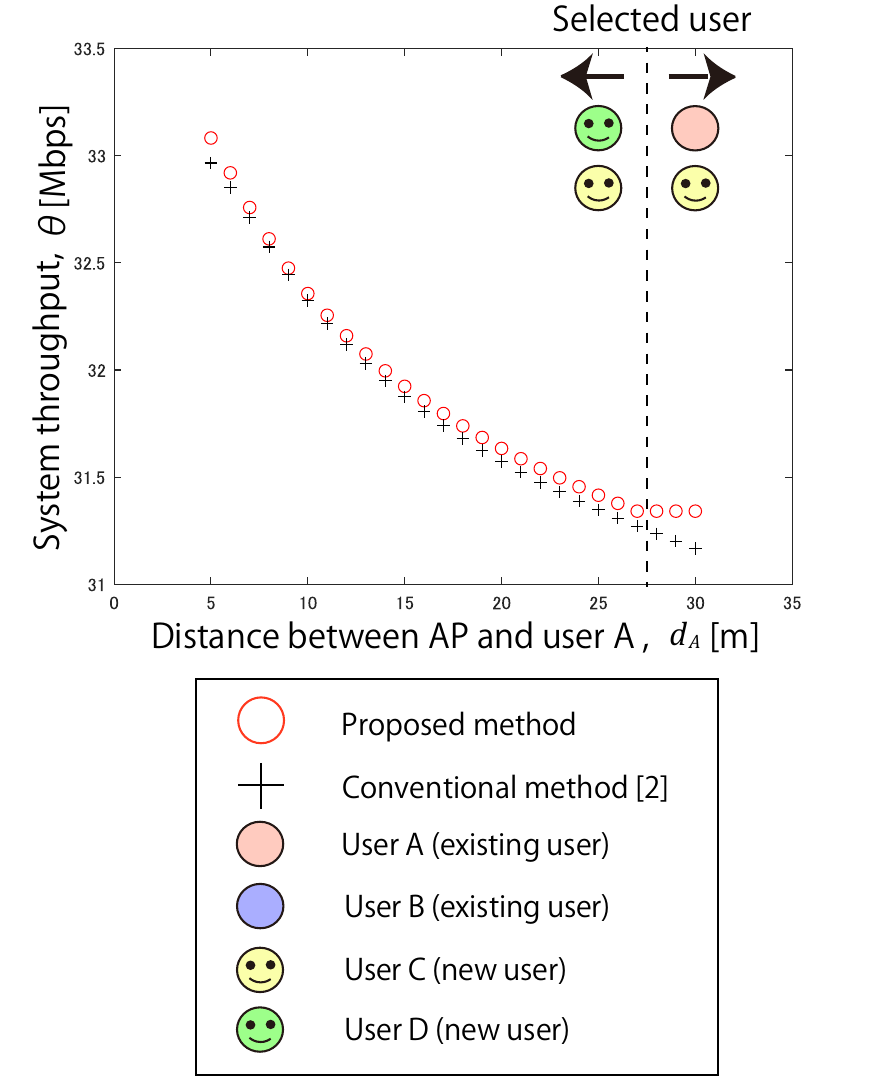}
    \subcaption{System throughput of proposed and conventional methods \cite{miyata_CCNC}.}
  \end{minipage}
  \begin{minipage}[b]{0.33\linewidth}
    \centering
    \includegraphics[keepaspectratio, scale=0.35]{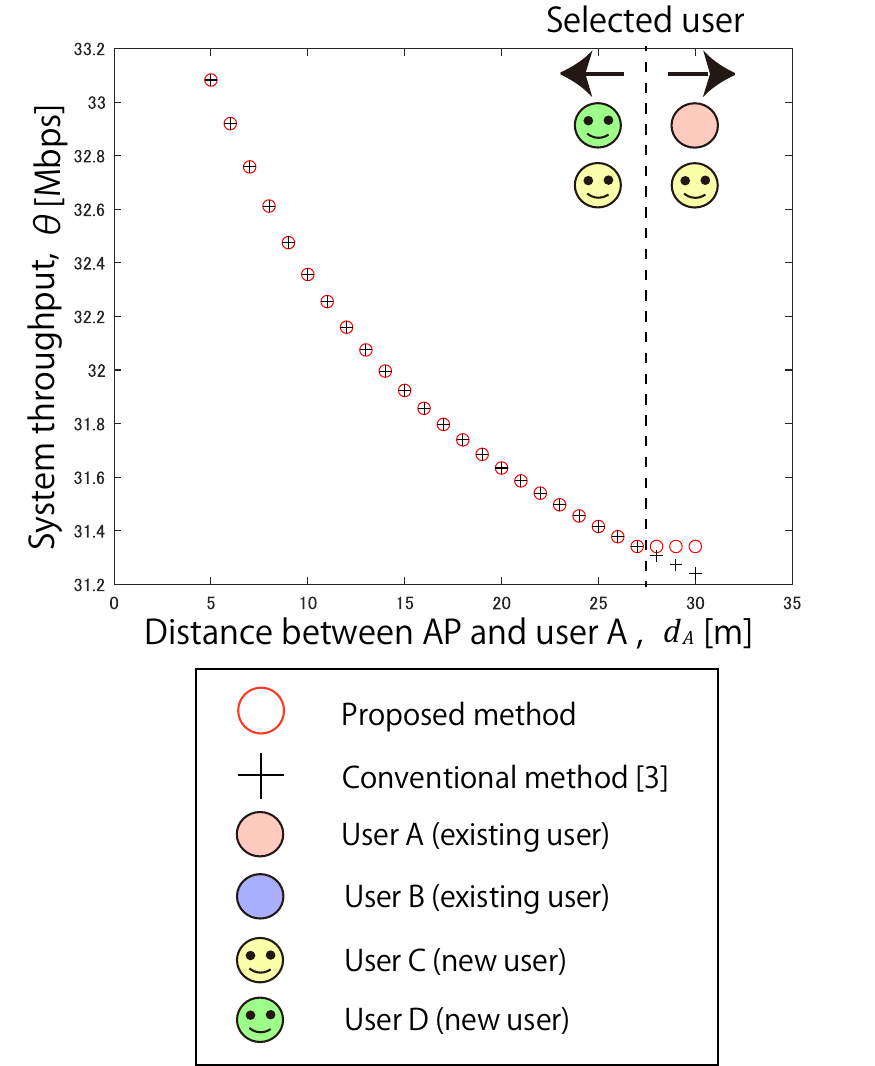}
    \subcaption{System throughput of proposed and conventional methods \cite{kato_CCNC}.}
  \end{minipage}
  \caption{System throughput of each method in pattern \((\mathrm{V}\hspace{-1.2pt}\mathrm{I})\).}
  \label{System_throughput_comparison_ver6}
\end{figure*}

\begin{figure}[!t]
  \begin{minipage}[b]{0.48\linewidth}
    \centering
    \includegraphics[keepaspectratio, scale=0.3]{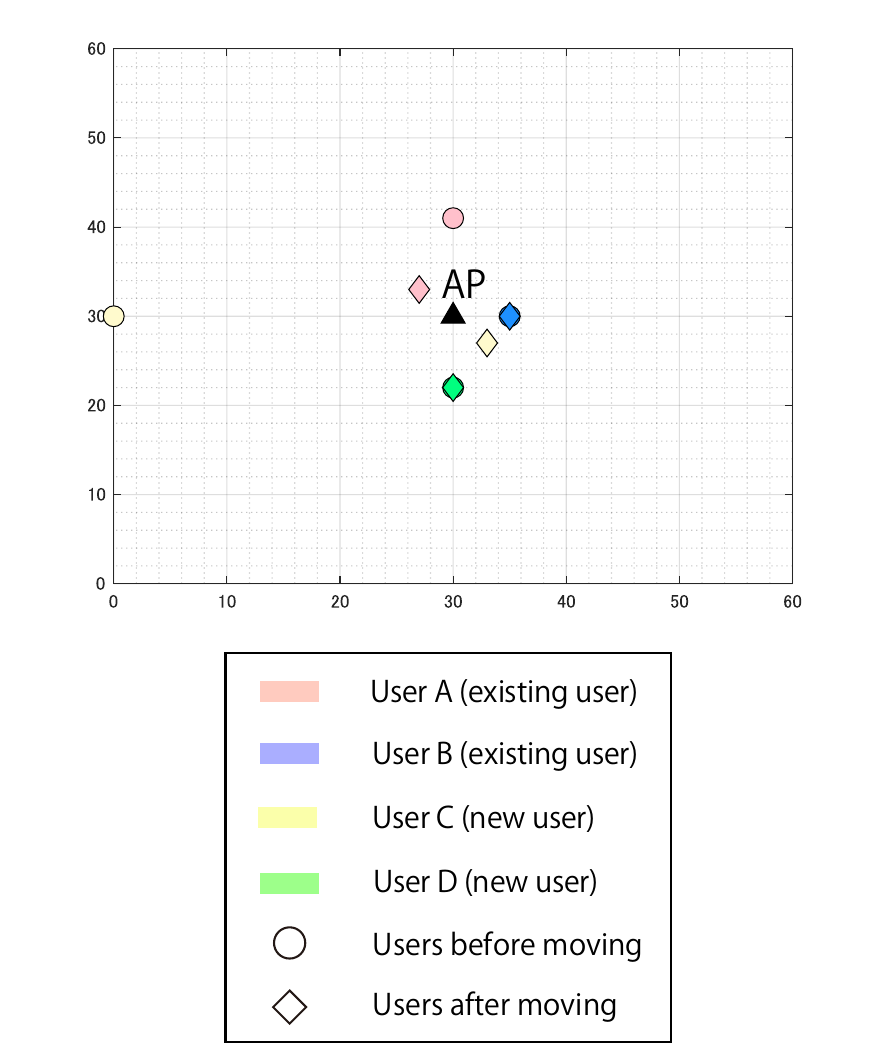}
    \subcaption{Optimal position of each user at \(\hat{\bm{d}}_{A} = (11, 90^\circ)\).}
  \end{minipage}
  \begin{minipage}[b]{0.48\linewidth}
    \centering
    \includegraphics[keepaspectratio, scale=0.3]{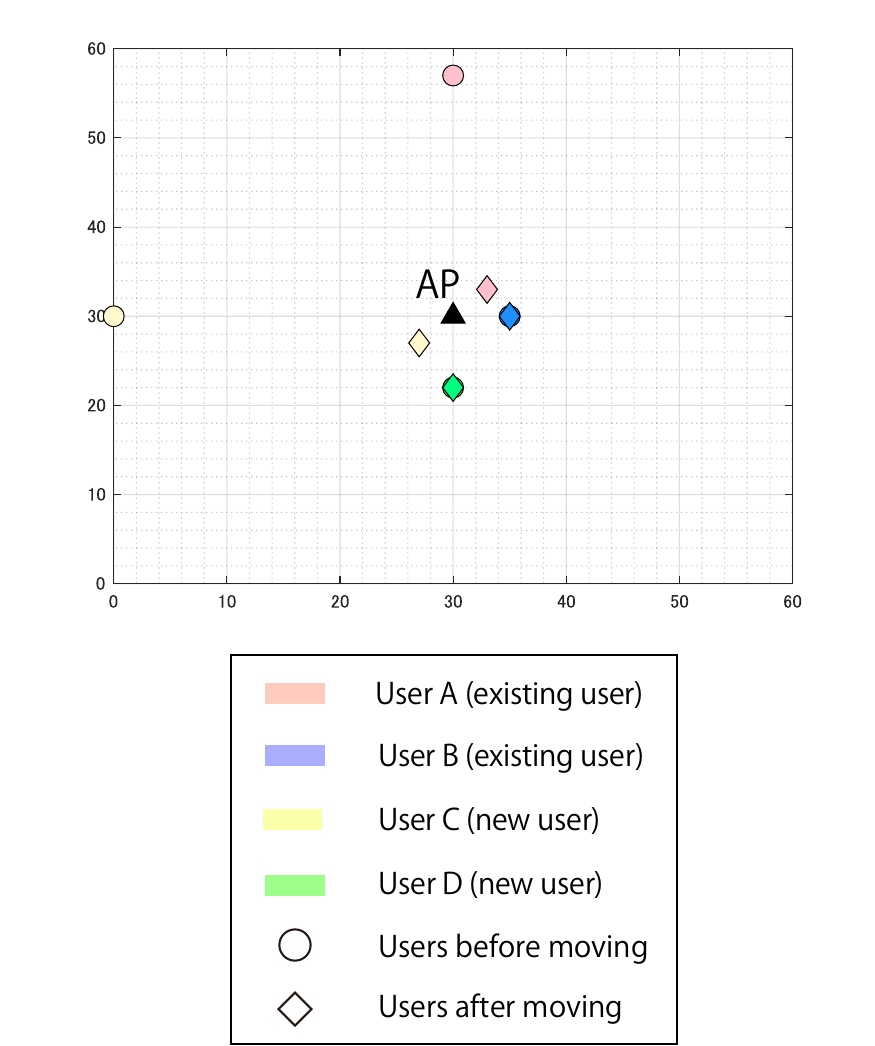}
    \subcaption{Optimal position of each user at \(\hat{\bm{d}}_{A} = (27, 90^\circ)\).}
  \end{minipage}
  \caption{Optimal position of each user for the proposed method in pattern \((\mathrm{I}\hspace{-1.2pt}\mathrm{V})\).}
  \label{User_saiteki_position_pattern4}
\end{figure}

However, for all patterns, the improvement ratio of the system throughput in the proposed method is not better than that of the conventional methods \cite{miyata_CCNC}\cite{kato_CCNC} when the initial position of user A is close to the AP. This cause can be explained using Fig. \ref{User_saiteki_position_pattern2_35}. Fig. \ref{User_saiteki_position_pattern2_35} shows the users' optimal positions for the proposed and the conventional method \cite{miyata_CCNC} when the initial position of user A in the pattern \((\mathrm{I}\hspace{-1.2pt}\mathrm{I})\) is \(\hat{\bm{d}}_{A} = (5, 90^\circ)\).

The closer the user is to the AP, the higher the transmission rate. However, the closer the distance between the user and the AP, the stronger the effect of interference on other users. Thus, the transmission rate obtained by other users is reduced. The interference has a greater effect on users who are far away from the AP. However, if the initial position of user A in the pattern \((\mathrm{I}\hspace{-1.2pt}\mathrm{I})\) is \(\hat{\bm{d}}_{A} = (5, 90^\circ)\), there is no user who gets an extremely low transmission rate. Thus, because there are no users who suffer large amounts of interference, the optimal position of the user is close to the AP, even in the proposed method considering the interference. As a result, the optimal position of users does not differ largely compared to the conventional method \cite{miyata_CCNC} in which users are connected in close to the AP. Therefore, the improvement ratio of the system throughput is not good.

From Fig. \ref{System_throughput_comparison_ver3}, we can see that there is no difference in value between the system throughput of the proposed and the conventional method \cite{kato_CCNC} in the pattern \((\mathrm{I}\hspace{-1.2pt}\mathrm{I}\hspace{-1.2pt}\mathrm{I})\). This indicates that the user's optimal position of the proposed and the conventional methods \cite{kato_CCNC} are exactly the same. This is because the transmission rates of users A and B, who were already connected to the AP, are not extremely low. In the pattern \((\mathrm{I})\) and the pattern \((\mathrm{I}\hspace{-1.2pt}\mathrm{I})\), from a certain initial position of user A, only user A may connect more far away from the AP than the other users. Thus, because user A gets an extremely low transmission rate, the system throughput will be low unless the user A is moved. Therefore, the proposed method, which considers the movement of existing users, is more effective than the conventional method \cite{kato_CCNC}, which only considers new users. However, in the pattern \((\mathrm{I}\hspace{-1.2pt}\mathrm{I}\hspace{-1.2pt}\mathrm{I})\), there is no case in which only one user connects at an extremely remote position from the AP even if the initial position of user A \(\hat{\bm{d}}_{A}\) changes, because the initial positions of users other than user A are already far away from the AP. In other words, there are no existing users (user A and user B) who obtain extremely low transmission rates. As a result, the system throughput of the conventional and proposed methods did not differ because only new users (users C and D) move in order to improve the system throughput.

\begin{figure}[!t]
  \begin{minipage}[b]{0.48\linewidth}
    \centering
    \includegraphics[keepaspectratio, scale=0.3]{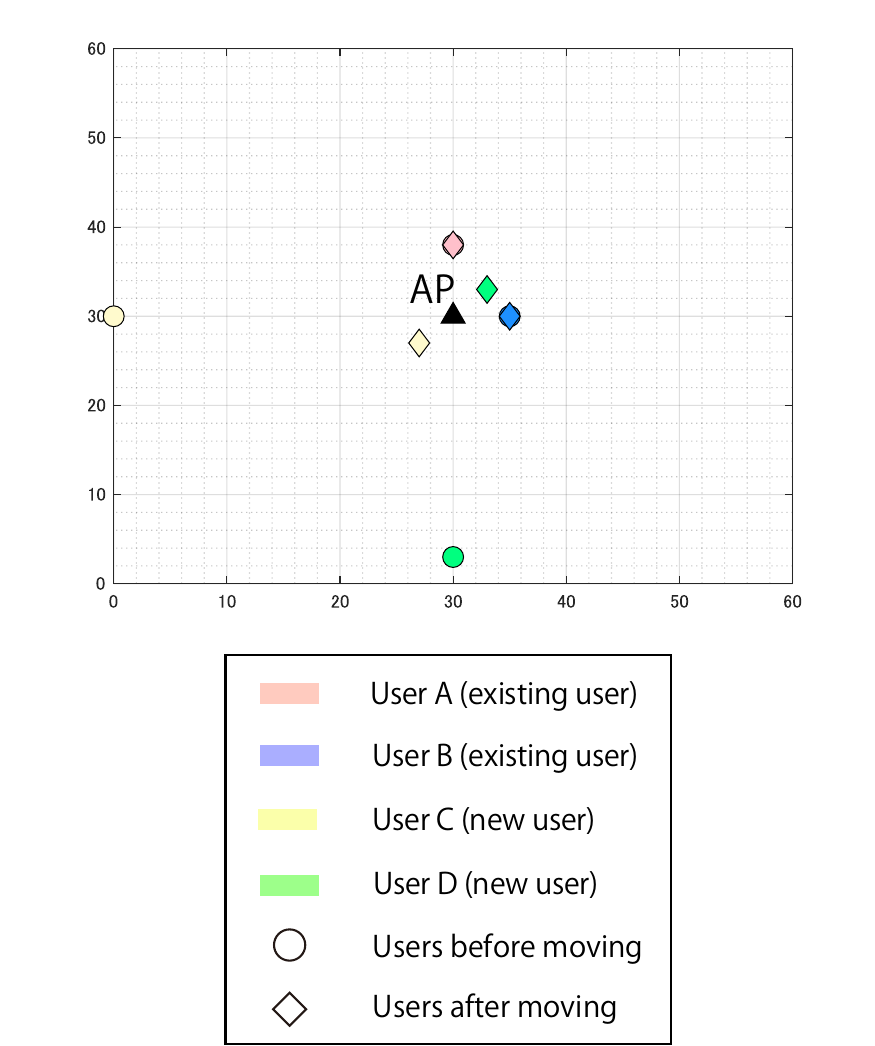}
    \subcaption{Optimal position of each user at \(\hat{\bm{d}}_{A} = (8, 90^\circ)\).}
  \end{minipage}
  \begin{minipage}[b]{0.48\linewidth}
    \centering
    \includegraphics[keepaspectratio, scale=0.3]{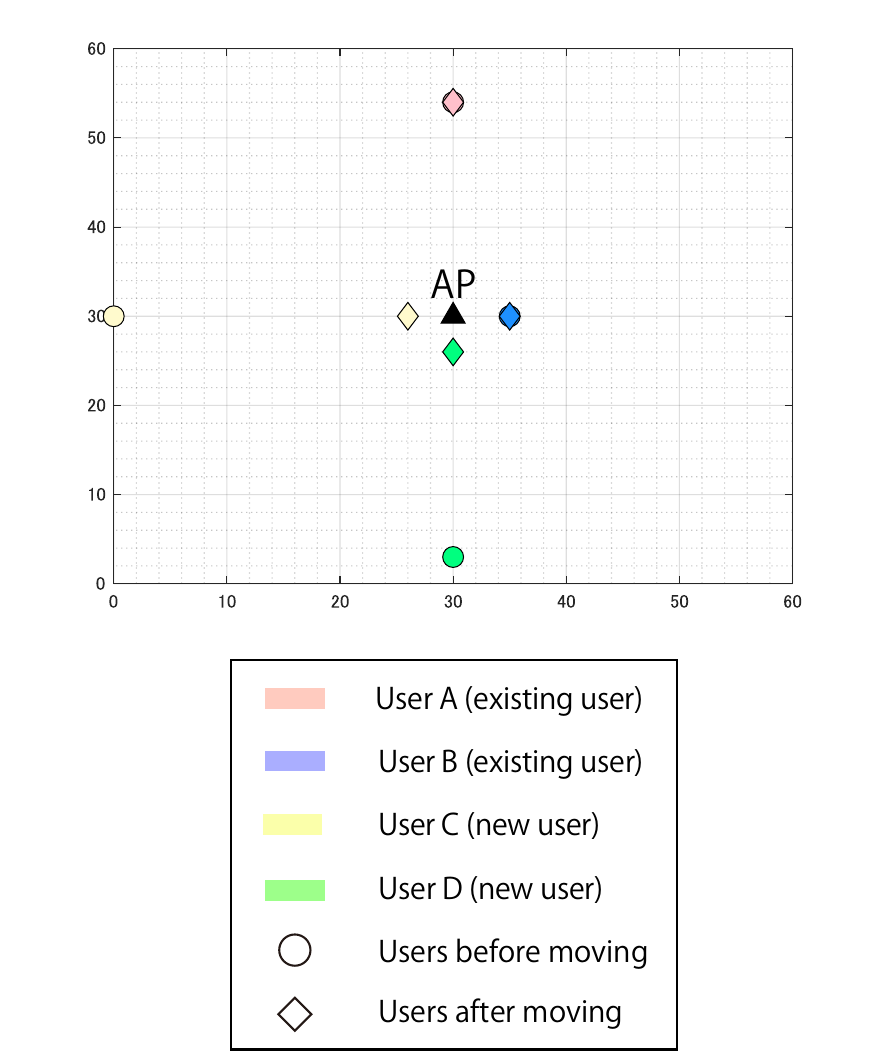}
    \subcaption{Optimal position of each user at \(\hat{\bm{d}}_{A} = (24, 90^\circ)\).}
  \end{minipage}
  \caption{Optimal position of each user for the proposed method in pattern \((\mathrm{V}\hspace{-1.2pt}\mathrm{I})\).}
  \label{User_saiteki_position_pattern6}
\end{figure}

Here, Fig. \ref{System_throughput_comparison_ver1}(a), \ref{System_throughput_comparison_ver2}(a), \ref{System_throughput_comparison_ver3}(a) show the improvement ratio of the system throughput \(\Delta\theta\) between the proposed method and the case in which users do not move. From these figures, we can assume that the initial position where all users are at equal distance from the AP has the lowest improvement ratio of the system throughput. However, there are very few situations in which users are connected at equal distance from the AP in a real environment. Therefore, we next analyze the system throughput and users' positions in a situation where all users cannot be at an equal distance from the AP. Specifically, we analyze the patterns in Fig. \ref{Assumed_enviroment}(d)--(f).

Fig. \ref{System_throughput_comparison_ver4}, \ref{System_throughput_comparison_ver5}, \ref{System_throughput_comparison_ver6} show the system throughput characteristics for the pattern \((\mathrm{I}\hspace{-1.2pt}\mathrm{V})\), pattern \((\mathrm{V})\), pattern \((\mathrm{V}\hspace{-1.2pt}\mathrm{I})\). From Fig. \ref{System_throughput_comparison_ver4}--\ref{System_throughput_comparison_ver6}, even if the initial positions of each user are unequal, the system throughput improves when the initial position of user A is far away from the AP. In particular, the closer the initial position of user D is to the AP, the higher the system throughput, such as in the pattern \((\mathrm{I}\hspace{-1.2pt}\mathrm{V})\). This is similar to the situation in Fig. \ref{User_saiteki_position_pattern2_60}. In the conventional methods \cite{miyata_CCNC}\cite{kato_CCNC}, when the initial position of user A \(\hat{\bm{d}}_{A}\) is far from the AP, the transmission rate obtained by user A is extremely low. As a result, the system throughput is low. Therefore, in the pattern \((\mathrm{I}\hspace{-1.2pt}\mathrm{V})\), the closer the initial position of user D is to the AP, the higher the interference to user A is. The system throughput in the pattern \((\mathrm{I}\hspace{-1.2pt}\mathrm{V})\) is greatly improved because the transmission rate of user A is greatly reduced.

Next, we analyze the characteristics of the users' optimal positions when the users are at unequal distance. In order to analyze this characteristic, as an example, Fig. \ref{User_saiteki_position_pattern4} shows the optimal position of user A in the pattern \((\mathrm{I}\hspace{-1.2pt}\mathrm{V})\) when the initial position of user A is \(\hat{\bm{d}}_{A} = (11, 90^\circ)\) and \(\hat{\bm{d}}_{A} = (27, 90^\circ)\). In addition, Fig. \ref{User_saiteki_position_pattern6} shows the optimal position of user A in the pattern \((\mathrm{V}\hspace{-1.2pt}\mathrm{I})\) when the initial position of user A is \(\hat{\bm{d}}_{A} = (8, 90^\circ)\) and \(\hat{\bm{d}}_{A} = (24, 135^\circ)\).

From Fig. \ref{User_saiteki_position_pattern4}, we can see that if only one of the user's initial positions is far from the AP (a situation that degrades the performance of the entire system due to one user), this user moves closer to the AP. Moreover, even if two users are far away from the AP in their initial positions (a situation that would not degrade the performance of the entire system), these users move closer to the AP. This is because of the following reasons. If there are users connected in closer to the AP, the transmission power of the users connected in closer will be higher. In addition, the transmission rate will be higher. However, the transmission rate of a user who is connected far away from the AP will be extremely lower, because the interference power will be very high compared to his own transmission power. Therefore, if there are users connected in closer to the AP, there will be users with extremely low transmission rates due to interference, and system throughput will be degraded. From this, users who are far away from the AP need to increase their own transmit power and reduce the interference. As a result, the user who is initially far away from the AP moves closer to the AP. For the same reason, users who are connected at far away from the AP move closer to the AP in Fig. \ref{User_saiteki_position_pattern6}(a). From Fig. \ref{User_saiteki_position_pattern6}(b), even when only one of the users' initial positions is connected closer to the AP, users who are connected far away from the AP will move closer to the AP. In this case, only user A is connected far away from the AP, which causes the transmission rate of user A to be extremely low. However, even if one user gets an extremely low transmission rate, three users make up for the negative factor, increasing system throughput, because the three users connect in closer to the AP. 

In exceptional cases, there are cases in which it is better to place all users at an equal distance. This is the case for the pattern \((\mathrm{I}\hspace{-1.2pt}\mathrm{I})\) and pattern \((\mathrm{I}\hspace{-1.2pt}\mathrm{I}\hspace{-1.2pt}\mathrm{I})\). This is because only two users cannot make up for the lower transmission rate, even if the two users who are far away move closer to the AP. As a result, we can assume that the optimal position would be to place all users at an equal distance.

From the above discussion, our conclusion is two factors:
\begin{quote}
\begin{itemize}
\item Basically, a user who is far away from the AP should be moved closer to the AP.
\item If there are no users connected closer to the AP, position of all users equally far away because only two users cannot make up for the lower transmission rate.
\end{itemize}
\end{quote}

\section{CONCLUSION}
In this paper, we propose a AP connection method to maximize the system throughput considering interference frequency and initial position of all users. The proposed method successfully maximizes system throughput considering the interference frequency and the performance degradation of the overall system due to one user obtaining an extremely low transmission rate. By proving that this problem is based on a potential game, we were able to theoretically analyze the complex wireless network environment. In the future, we will consider user incentives and analyze a situation when the number of users increases.

\section*{ACKNOWLEDGEMENT}
These research results were obtained from the commissioned research (No.05601) by National Institute of Information and Communications Technology (NICT), Japan. In addition, this work was supported by JSPS KAKENHI Grant Numbers JP19K11947, JP22K12015, JP20H00592, JP21H03424.

\bibliography{refs_kato}
\bibliographystyle{IEEEtran}

\begin{IEEEbiography}[{\includegraphics[width=1in,height=1.25in,clip,keepaspectratio]{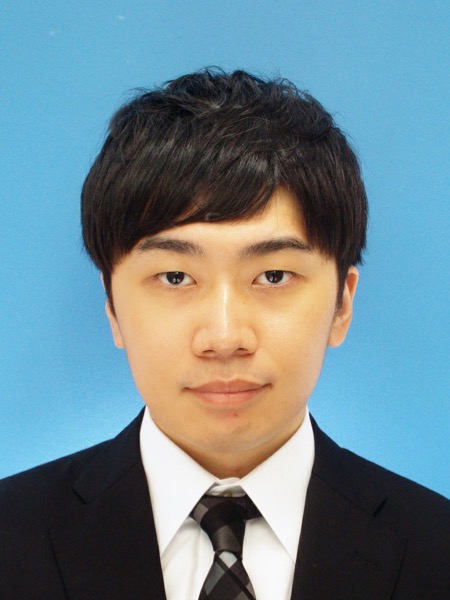}}]{Yu Kato} was born in Shizuoka, JPN.
He received the B.E. degree from the Shibaura Institute of Technology in 2022.
He is currently undertaking a M.E. candidate of the Department of Electrical Engineering and Computer Science in the Shibaura Institute of Technology in 2023.
His research interests include mathematical modelling and analysis for AP connection method considering user behaviour, and game theory in communication networks.
\end{IEEEbiography}

\begin{IEEEbiography}[{\includegraphics[width=1in,height=1.25in,clip,keepaspectratio]{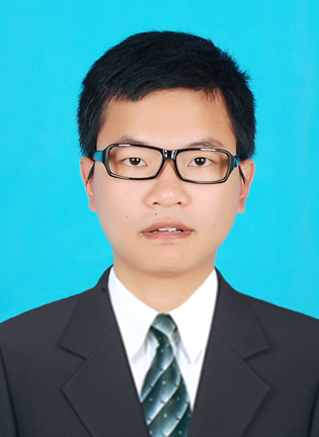}}]{Jiquan Xie} received his B.S. and M.S. degrees from the University of Electronic Science and Technology of China (UESTC), Chengdu, China, in 2013 and 2016, respectively. From 2018 to 2019, he was a Research Assistant in Shanghai Jiao Tong University, Shanghai, China. He received his Ph.D. degree from Nagoya University in 2021. He is currently a collaborative researcher at Nagoya University, Japan.

Dr. Xie was awarded the IEEE Nagoya section Young Researcher Award in 2021, and the Excellent Doctoral Award of Informatics Department at Nagoya University in 2022.
\end{IEEEbiography}

\begin{IEEEbiography}[{\includegraphics[width=1in,height=1.25in,clip,keepaspectratio]{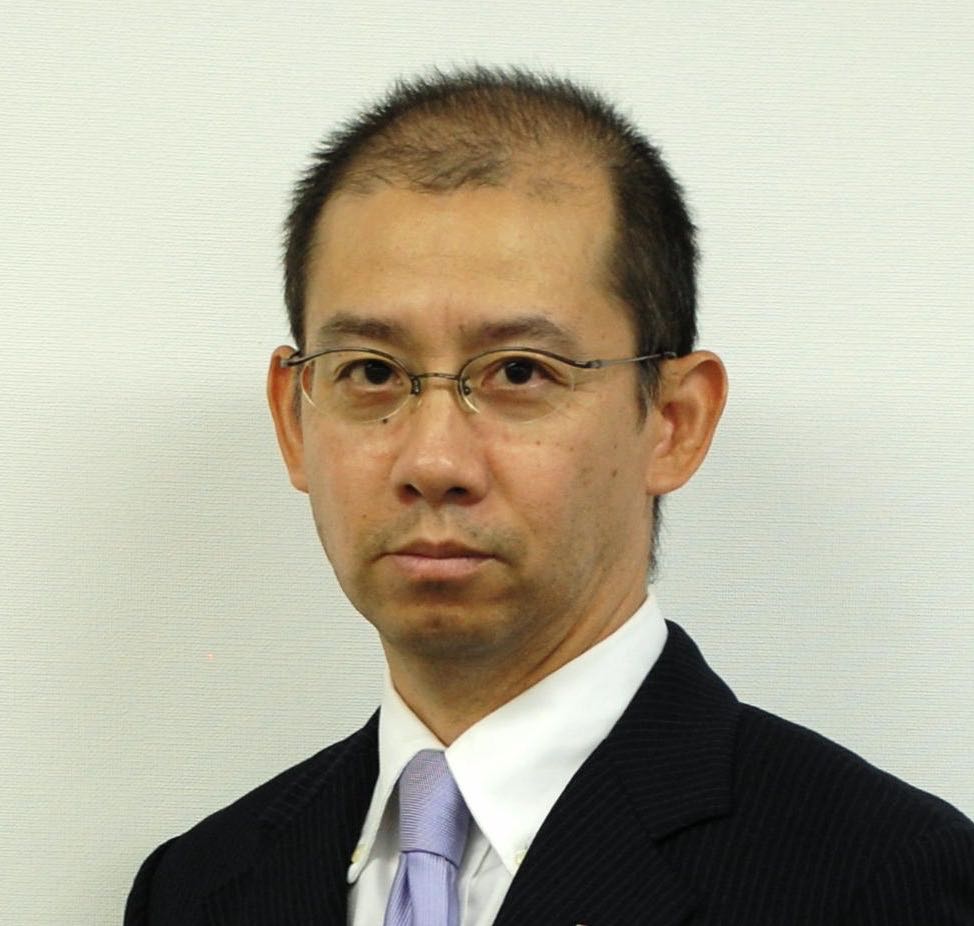}}]{Tutomu Murase} was born in Kyoto, Japan in 1961.
He received his M.E. degree from Graduate School of Engineering Science, Osaka University, Japan, in 1986. He also received his PhD degree from Graduate School of Information Science and Technology, Osaka University in 2004. He joined NEC Corporation Japan in 1986. He was a visiting professor in Tokyo Institute of Technology in 2012—2014. He is currently a professor in Nagoya University, Japan.
He has been engaged in researches on traffic management for high-quality and high-speed internet. His current interests include transport and session layer traffic control, wireless network resource management and network security. He is also interested in user cooperative mobility research. He received Best Tutorial Paper Award on his invited paper about QoS control for overlay networks in IEICE transaction on communication in 2006. He has been served as TPC for many IEEE conferences and workshops. He has more than 90 registered patents including some international patents. He was a secretary of IEEE Communications Society Japan Chapter. He is a member of IEEE and a fellow of IEICE.
\end{IEEEbiography}

\begin{IEEEbiography}[{\includegraphics[width=1in,height=1.25in,clip,keepaspectratio]{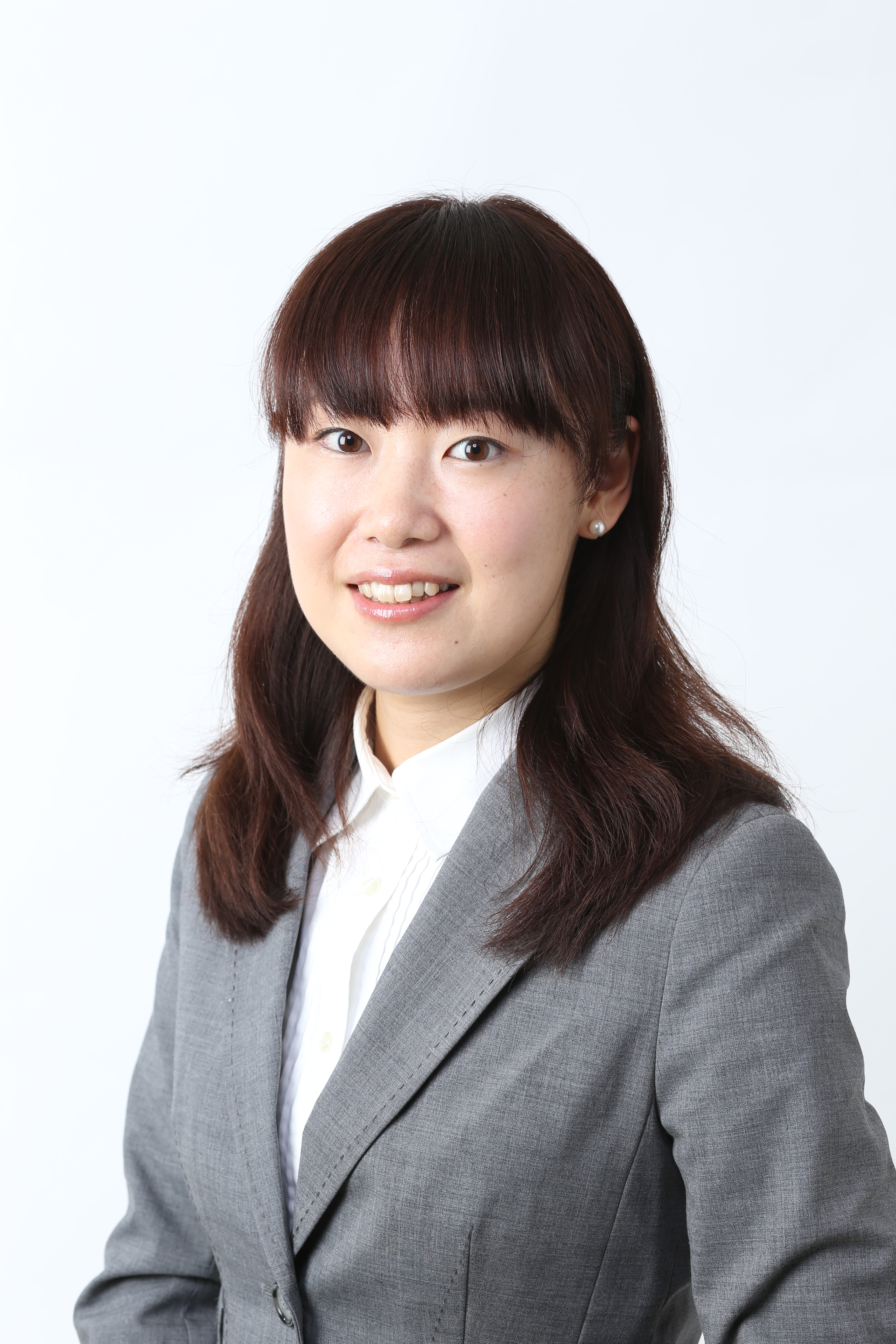}}]{Sumiko Miyata} received the B.E. degree from the Shibaura Institute of
Technology in 2007, and the M.E. and D.E. degrees from the Tokyo
Institute of Technology in 2009 and 2012 respectively. From 2012 to
2015, she was a research associate at the Kanagawa University. Since
2015, she had been an assistant professor at the Shibaura Institute of
Technology. Since 2018, she has been an associate professor at the
Shibaura Institute of Technology. Her research interests include
mathematical modelling and analysis for QoS performance evaluation,
queuing theory, game theory and resource allocation problems in
communication networks and information security.
\end{IEEEbiography}

\end{document}